%% file: main.tex
\documentclass[acmsmall, nonacm, authorversion, screen]{acmart}
\usepackage{xspace, nicefrac, ctable, enumerate, bbm}
\usepackage{algpseudocode}
\usepackage{algorithm}
\usepackage{subcaption}
\usepackage{thmtools}
\usepackage{nicefrac}
\usepackage{mathtools}

\newcommand{\C}[1]{\ensuremath{\mathrm{Cost}(#1)}}
\newcommand{\rob}{\ensuremath{\texttt{PAWS}}\xspace}
\newcommand{\decision}{\ensuremath{\texttt{PAWS-DA}}\xspace}

\newcommand{\paowt}{\ensuremath{\texttt{PA-OWT}}\xspace}
\newcommand{\owt}{\ensuremath{\texttt{OWT}}\xspace}
\newcommand{\mix}{\ensuremath{\texttt{Mix}}\xspace}

\newcommand{\opt}{\ensuremath{\texttt{OPT}}\xspace}

\newcommand{\alg}{\ensuremath{\texttt{ALG}}\xspace}
\newcommand{\Opt}{\texttt{OPT}\xspace}
\newcommand{\adv}{\texttt{ADV}\xspace}
\newcommand{\Alg}{\texttt{ALG}\xspace}

\algnewcommand{\Initialize}[1]{%
  \State \textbf{initialize:}
  \Statex \hspace*{\algorithmicindent}\parbox[t]{.8\linewidth}{\raggedright #1}
}

\newcommand{\sref}[2]{\hyperref[#2]{#1 \ref{#2}}}

\declaretheorem[name=Assumption]{assumption}

\newcommand{\N}{\mathbb{N}}

\usepackage{graphicx} %

\AtBeginDocument{%
  }

\title[Peak-Aware One-Way Trading]{Peak-Aware One-Way Trading: Competitive and Learning-Augmented Algorithms for Energy Procurement}

\author{Jackson Bibbens}
\affiliation{%
  \institution{University of Massachusetts Amherst}
  \country{USA}
}
\email{jbibbens@umass.edu}

\author{Adam Lechowicz}
\affiliation{%
  \institution{University of Massachusetts Amherst}
  \country{USA}
}
\email{alechowicz@umass.edu}

\author{Nicolas Christianson}
\affiliation{%
  \institution{Johns Hopkins University}
  \country{USA}
}
\email{christianson@jhu.edu}

\author{Bo Sun}
\affiliation{%
  \institution{University of Ottawa}
  \country{Canada}
}
\email{bo.sun@uottawa.ca}

\author{Mohammad Hajiesmaili}
\affiliation{%
  \institution{University of Massachusetts Amherst}
  \country{USA}
}
\email{hajiesmaili@cs.umass.edu}

\begin{abstract}
    \input{sec-abstract}

\end{abstract}

\begin{document}

\maketitle

\section{Introduction}
\label{sec:intro}

\input{1-intro.tex}

\subsection{Related Work}
\label{sec:relwork}

\input{sec-relwork.tex}

\section{Problem and Preliminaries}
\label{sec:prob}

\input{sec-problem.tex}

\section{Competitive Algorithms for \paowt}\label{sec:rob}
\input{sec-robust-alg}

\section{A Fundamental Limit}\label{sec:lower-bound}

\input{sec_preliminaries}

\section{Learning-Augmented Algorithms for \paowt}\label{sec:decision}
\input{sec_learning-aug-algs}

\section{Experiments}\label{sec:experiments}
\input{sec-experiments}

\section{Conclusion}

\input{sec-conclusion}

\begin{acks}
\input{acknowledgments}
\end{acks}

\bibliographystyle{ACM-Reference-Format}
\bibliography{references}

\section*{Artificial Intelligence Use Acknowledgment}
\input{sec-ai-use}

\appendix

\section{Proofs of Preliminaries}\label{app:prelim}
\input{appendix-preliminaries}

\section{Proofs of Learning-Augmented Algorithms}\label{app:LAA}
\input{appendix-learning-aug-algs}

\end{document}

%% file: sec-abstract.tex
We introduce \textit{peak-aware one-way trading} (\paowt), an online problem in which a player must purchase a fixed quantity of an asset by a deadline, minimizing its total cost plus a charge proportional to the largest single purchase at any time.
The model is motivated by electricity procurement, where the cost of energy rewards buying at low prices, but peak demand charges discourage concentrating purchases in short periods.
Balancing these incentives is challenging when future prices are unknown: a large purchase can establish a peak that later decisions cannot undo.
We develop a peak-aware online algorithm that balances price and peak control with deadline feasibility and achieves the order-optimal competitive ratio. 
We further extend our results to a learning-augmented algorithm that uses predictions of optimal purchasing decisions to achieve an order-optimal consistency-robustness tradeoff, exploiting accurate advice while retaining worst-case guarantees in the face of arbitrary prediction errors.
Finally, we evaluate our algorithms in a case study on electricity procurement using historical electricity price traces to understand the benefits of peak-aware purchasing.

%% file: 1-intro.tex
\noindent In the classic one-way trading problem, an online player buys/sells shares of an item over a fixed time horizon.  They observe each price sequentially and must decide how much to trade before seeing the next price.  In the purchasing variant, the player must acquire a fixed quantity before the deadline while minimizing their total cost. This paper introduces \textit{peak-aware one-way trading} (\paowt), which adds a second consideration: the player pays not only for the shares purchased, but also for the largest purchase made in any single round. Specifically, the player purchases at least one unit over a horizon of $T$ rounds and minimizes $\sum_{t=1}^{T} p_t x_t + C \max_{t \in [T]} x_t$,
where $p_t$ is the price, $x_t$ is the (fractional) purchase amount, and $C > 0$ is the peak-charge coefficient. The addition of the peak term creates a tension absent from the classic one-way trading setting: a low price encourages buying more now, but doing so can establish a peak charge that no later decision can undo.

This problem is motivated by electricity procurement for large energy consumers, such as data centers or factories.  Commercial electricity bills can include both an energy charge, based on consumption, and a demand charge, based on the customer's highest power demand during the billing period~\cite{REopt-UtilityRates}. These charges reward different behaviors. Time-varying energy prices encourage an operator to move consumption to cheaper intervals, whereas a demand charge encourages the operator to spread consumption out over time. For example, an enterprise charging a fleet of electric vehicles before the start of the workday can reduce energy costs by concentrating charging in a low-price interval, but doing so will increase the demand component of their energy bill.

The \paowt model is related to two lines of work in the online optimization literature. In online search problems such as one-way trading and $k$-search, there is a long-term fixed demand which can be met flexibly across the time horizon, and the goal is to purchase enough supply to satisfy the demand while minimizing the total purchasing cost in the face of uncertain future prices~\cite{ElYaniv:01,Lorenz:09,Cao:20,SunZeynali:20}. Viewing the problem as one of allocating a fixed trading budget across prices that arrive online, these problems have an online packing structure~\cite{Buchbinder:09, Zhou:08}. To the best of our knowledge, costs associated with peak purchases (i.e., demand charges) have not been studied in this line of work, and significantly complicate algorithm design due to their coupling the cost of decisions across time. In contrast, studies of peak-aware economic dispatch~\cite{Zhang:15,Zhang:18} explicitly account for demand charges. However, in this line of work, demands are uncertain and arrive over time, and the decisions at each time are whether to purchase supply from the grid (which includes both purchasing cost and demand charge) or from an alternative supply, such as energy storage~\cite{Mo:26:Automatica} or expensive local generation~\cite{Zhang:15,Zhang:18,Lee:21Peak, Himmelreich:26}. However, nearly all models in this line of work have an online covering structure~\cite{Azar:16,Nagarajan:17}, where demand must be met exactly at each stage; while some works also consider flexible demand, the proposed algorithms are typically only heuristic, and do not come with formal performance guarantees for the setting with both flexible demand and time-varying prices~\cite{Wu:22EV, Lee:20, Mo:26:Automatica}.

\paowt sits at the intersection of these tracks of literature, considering the question of how to optimally handle both a flexible, horizon-wide purchasing requirement and a peak penalty under time-varying prices.
Buying aggressively at low prices risks an unnecessarily large peak if many cheap prices arrive later, while spreading purchases out more cautiously risks paying higher prices, and postponing too much procurement to the end of the instance can force a large purchase near the deadline. 
Knowledge of the horizon also matters: a player that knows how many rounds remain can plan how to distribute its remaining purchases, whereas a player facing an unknown horizon cannot rely on those opportunities. We therefore ask:
\begin{quote}
    \emph{Is it possible to design competitive algorithms for \paowt that balance a purchasing cost and a peak charge while meeting a deadline?}
\end{quote}

The online algorithms literature typically studies algorithm design using worst-case competitive analysis: comparing an algorithm's total cost with that of an offline optimum that knows the entire input sequence in advance. 
In an electricity market context, worst-case guarantees are useful: prices are known to be non-stationary and challenging to predict~\cite{Qu:24}.
However, algorithms that optimize purely for the worst-case can be overly conservative in practice.  \emph{Learning-augmented algorithms} provide a framework for using untrusted advice about the future (e.g., machine-learned predictions, day-ahead price forecasts, etc.) without losing the worst-case guarantees associated with traditional algorithm design.
Prior related work for one-way trading and peak-aware energy scheduling has shown that such algorithms can use predictions (e.g., of the best price or the optimal decisions at each round) to improve empirical performance without sacrificing worst-case guarantees~\cite{Lee:21Peak, SunLee:21,Johnson:26UQ,Himmelreich:26}.
This naturally motivates our second question:
\begin{quote}
    \emph{Can predictions of the optimal purchasing decisions improve the performance of algorithms for \paowt while preserving a worst-case guarantee?}
\end{quote}

\subsection{Contributions}

We present a comprehensive analysis of the \paowt problem, providing order-optimal online algorithms both for the competitive and learning-augmented settings. In particular, our main contributions are as follows.

\paragraph{Competitive results.} We show that---unlike classic one-way trading (\owt) ---there is an asymptotic separation between the \textit{known-horizon} and \textit{unknown-horizon} settings. When the horizon $T$ is unknown, we show a linear lower bound of $\Omega(T)$ (\autoref{thm:unknown-T-lb}), which can be matched up to constant factors by any algorithm, which motivates a focus on the known-horizon setting for the rest of the paper. In  \autoref{sec:rob}, we present Peak-Aware Water Level Smoothing (\rob), a dual-threshold online algorithm, and show that it achieves the order-optimal  competitive ratio for \paowt: $\sqrt{\nicefrac{U}{(L+C/T)}}$ (Theorems \ref{thm:rob-ub} and \ref{thm:known-T-lb}). This bound is a generalization of the one-way trading competitive ratio $\sqrt{\nicefrac{U}{L}}$ (for the minimization setting) and results in a sublinear dependence on $T$ when $L=0$. Unlike classic one-way trading, \rob only requires fixed thresholds rather than a threshold curve, and can accommodate $L=0$.  We additionally show that \paowt transitions smoothly to \owt as $C\rightarrow 0$ and to trivial peak minimization when $C$ is sufficiently large.

\paragraph{Learning-augmented results.} In \autoref{sec:decision}, we consider the design of learning-augmented algorithms for the \paowt problem. We first give a lower bound on the achievable consistency-robustness trade-off for any learning-augmented algorithm (\autoref{thm:advice-gen-lb}). Specifically, any $(1+\epsilon)$-consistent algorithm requires $\Omega(\max\{\sqrt{\nicefrac{U}{(L+C/T)}},\nicefrac{U}{(1+\epsilon)(L+C/T)}\})$-robustness. We then show that a naive convex combination of \rob and decision advice (\mix) is unable to match this tradeoff, giving robustness linear in $T$ (\autoref{thm:mixAlg}) and necessitating the development of a more sophisticated strategy. Motivated by this gap, we propose an algorithm, \decision, which generates proposed actions based on decision advice and projects those actions onto a robust interval. We show that \decision can achieve any desired consistency and matches the order-optimal trade-off between robustness and consistency (\autoref{thm:decision-advice-consistency}, \autoref{thm:decision-advice-ub}).

\paragraph{Technical insights} 
Algorithms for \owt and online search compare purchases made at a reservation price or threshold curve to an optimal cost equaling the lowest price seen. The addition of a peak charge breaks this analysis, as the offline optimum may now spread purchases across multiple time steps. In \paowt, the offline dual problem elicits a \textit{water filling} solution, where \Opt pays cost at least $\Phi$ whenever the amounts by which prices fall below $\Phi$ sum to at most the peak-charge coefficient $C$. Our competitive algorithm \rob follows this structure; buying proportionally to how far the price falls below a fixed water level, so it either completes its purchase below that level or certifies that the optimum pays at least as much. \rob also introduces a rate limiter to spread compulsory purchases as the deadline nears, ensuring feasibility while preventing a large peak.

Integrating predictions in this framework requires a mechanism to limit the damage of poor advice. The convex combination strategy of \mix --- appropriate in related problems --- here inherits a fixed fraction of the advice's peak regardless of advice quality. By comparison, \decision uses shadow decisions to greedily track advice while the accumulated cost of these decisions (including peak charge) remains within a budget. This reduces the exposure to bad advice from a multiplicative to an additive term. In each time step, \decision will purchase the larger of the shadow and water filling decisions, capped by a rate limiter as in \rob.

\paragraph{Case study.}
In \autoref{sec:experiments}, we evaluate the performance of our proposed algorithms in a case study on electricity procurement in four US grid regions. We observe that day-ahead prices serve as effective predictions, allowing learning-augmented algorithms to perform within $\approx20\%$ of the offline optimal solution on average. Empirical behavior of the algorithms generally follows the theoretical results, despite testing on predictable, non-adversarial instances. Our theoretical results rely on both \textit{price} and \textit{peak charge} separation between algorithms and the offline optimal; we find empirical variation on both dimensions, though slightly more within the peak charge than prices. Similarly, algorithm performance tends to improve as $C$ increases, and degrades as $T$ increases --- both of which align with our theoretical analysis.

%% file: sec-relwork.tex
Our work builds on several related lines of work in the online algorithms literature and application-focused studies.
We first discuss how \paowt directly relates to prior work in online search and peak-aware economic dispatch, before highlighting connections to the broader online optimization literature and relevant application studies.

\paragraph{Online search}
Most directly related to our setting, the classic online problems of one-way trading and $k$-search study sequential purchases (or sales) of a divisible asset at varying prices revealed over time~\cite{ElYaniv:01,Lorenz:09}.  
In the purchasing setting, the player must buy a fixed amount by a deadline, but pays no charge for the largest purchasing decision.
These problems (and related problems such as online knapsack~\cite{Zhou:08}) admit optimal threshold-based algorithms that make decisions based on price observations and remaining budget or capacity~\cite{Cao:20,SunZeynali:20}.
Extensions of these problems have considered switching costs that penalize the player for changing their purchasing decisions over time~\cite{Lechowicz:23PR,Lechowicz:24OCS}, inventory models~\cite{Lin:19,Yang:19}, and nonlinear price functions~\cite{Lechowicz:24LongTerm}.
Unlike these models, \paowt must balance buying at low prices against keeping the largest purchase small.

\paragraph{Peak-aware economic dispatch.}
Several papers study peak charges through an online algorithms lens.
Canonical peak-aware economic dispatch assumes inelastic demand revealed online and chooses between grid purchases and local generation~\cite{Zhang:15,Zhang:18}.
Related work also considers flexible demand. \citet{Wang:13Workload} allow data-center workloads to be delayed or dropped under time-varying prices and peak charges, although they give guarantees only for a case with time-invariant prices where demand is dropped, not delayed. \citet{Zhao:15PeakEV} derive competitive guarantees for peak-minimizing online EV charging with flexible deadlines, but no time-varying prices---their model extends prior work on processor speed scaling~\cite{Bansal:07}.
Other extensions consider inventory~\cite{Johnson:11} and time-varying prices under an assumption that the peak charge exceeds $T \times$ the price range~\cite{Mo:21PeakDemand}---in this regime, focusing exclusively on peak minimization is optimal (see \sref{Lemma}{lem:largeC}).

Compared with these models, \paowt captures a distinct combination of features.  It adopts the maximal temporal flexibility of one-way trading: a fixed divisible demand requirement is known at the outset and may be allocated arbitrarily across the horizon as time-varying prices are revealed.
By allowing demand to be shifted freely across the horizon, \paowt isolates the fundamental online trade-off between \emph{when and how much to purchase} to exploit low prices versus limiting the peak created by those purchases.

\paragraph{Online convex optimization and covering}
\paowt's peak charge couples all purchase decisions through an $\ell_\infty$ norm.
This connects \paowt to online covering with convex and norm objectives, although in these problems the objective is known (i.e., prices are known in advance) and constraint rows arrive online~\cite{Azar:16,Nagarajan:17}.
Other problems such as convex body chasing and smoothed online convex optimization also include norms in their objectives to model the cost of switching between consecutive decisions, but do not consider costs (like a peak charge) that couple decisions across all time steps~\cite{Andrew:13, Sellke:20,Chen:18}.

\paragraph{Learning-augmented algorithms}
Learning-augmented algorithms use predictions to improve performance (\textit{consistency}) while retaining worst-case guarantees (\textit{robustness}).
For one-way trading and related online search models, studies consider \textit{instance advice} (e.g., a single predicted ``best price'' for each instance)~\cite{SunLee:21,Lee:24}. Follow-up work has shown that considering consistency-robustness alone can hide ``brittle'' behavior under small errors in instance advice~\cite{Elenter:24}.
In peak-aware energy scheduling, studies have also considered instance advice~\cite{Lee:21Peak}, with follow-up work that considers risk-sensitivity~\cite{Himmelreich:26}.
In the broader literature on online convex optimization, convex body chasing, and online search problems with switching costs, learning-augmentation has been studied using \textit{decision advice} that predicts the optimal decision in each time step~\cite{pmlr-v119-antoniadis20a,Christianson:22,Lechowicz:24OCS}.  
In our work, we primarily focus on decision advice; we briefly discuss how instance advice might inform the design of learning-augmented algorithms for \paowt in \autoref{sec:decision}.

\paragraph{Applications to electric load scheduling}

By providing worst-case guarantees under arbitrary price sequences, our competitive-analysis framework complements application-focused approaches that rely on forecasts, stochastic models, or control assumptions.
Flexible loads under demand charges have been studied using model predictive control, particularly for EV charging~\cite{Yang:20,Lee:20,Cambronne:26}, as well as through data-center workload shifting and partial execution~\cite{Wang:13Workload,Xu:14,Dabbagh:19}.
\citet{Wu:22EV} studied flexible EV charging under time-varying prices and peak-charge, formulating it as a bi-objective problem that first minimizes the peak, then minimizes energy cost---they used predictions and rolling-horizon optimization to solve this.
Related work has also considered the offline optimization of production scheduling under time-varying energy and demand charges~\cite{BatistaAbikarram:19} and reinforcement learning for demand-charge minimization with storage~\cite{Weber:24}.
These works study the same basic optimization tradeoff between energy cost and peak charges using forecast- or model-driven approaches; our competitive analysis complements them by providing worst-case guarantees relative to the offline optimum without assumptions on the accuracy of future price forecasts.

%% file: sec-problem.tex
In this section, we formalize peak-aware one-way trading, state our assumptions, and define the competitive ratio used throughout the paper.

\subsection{Problem Formulation}

We introduce the \textbf{p}eak-\textbf{a}ware \textbf{o}ne-\textbf{w}ay \textbf{t}rading problem (\paowt), where the player's goal is to purchase a fixed quantity of an asset by the end of a time horizon $T$ while minimizing their purchase cost and a peak charge.
Without loss of generality, the purchase quantity is normalized to one unit.    

At each time step $t\in[T]$, the player observes a unit price $p_t$ and chooses a purchase amount $x_t\in[0,1]$ before observing any future prices.
Each decision is irrevocable: a purchase cannot be canceled or moved to another time step.
The player must purchase at least one unit by the end of the final time step $T$, i.e., $\sum_{t=1}^T x_t\ge1$.

A purchase of $x_t$ units costs $p_t x_t$.
In addition to the total purchase cost, the player pays a single peak charge $C\max_{t\in[T]}x_t$, proportional to its largest purchase, where the peak-charge coefficient $C>0$ is constant and known in advance.
Thus, buying more at a low price may reduce purchase costs but increase the peak charge.
For a price sequence $\mathcal{I}=(p_1,\ldots,p_T)$, the offline problem is
\begin{align}
 \label{eq:paowt-model} 
\min_{\mathbf{x}\in[0,1]^T}\quad \sum_{t\in[T]} p_t x_t + C\max_{t\in[T]}x_t \quad\quad \text{s.t.}\quad \sum_{t\in[T]} x_t\ge1,
\end{align}
where the first term and second term in the objective are the purchase cost and the peak charge, respectively.
In the online setting, the player must meet the same purchase constraint without knowing future prices.
Any amount still needed in the final round must be purchased at $p_T$; this \textit{compulsory trade} contributes to both purchase cost and the peak charge.

\subsection{Preliminaries \& Assumptions}\label{sec:prelims}

We study \paowt within the online algorithms framework.  In this setting, the objective is to design an algorithm that achieves a small \textit{competitive ratio}~\cite{Manasse:88}, defined as follows:

Let $\Alg(\mathcal{I})$ denote the purchase schedule produced by a deterministic online algorithm, and let $\Opt(\mathcal{I})$ denote an optimal solution to \eqref{eq:paowt-model} with full knowledge of the price sequence.
We write their total costs as $\C{\Alg(\mathcal{I})}$ and $\C{\Opt(\mathcal{I})}$, respectively.
Let
$\Omega_T$ be the set of possible \paowt price instances.
The competitive ratio of \Alg on sequences of length $T$ is 

\begin{equation}
    \operatorname{CR}_T(\Alg)
    = \sup_{\mathcal{I}\in\Omega_T}
      \frac{\C{\Alg(\mathcal{I})}}{\C{\Opt(\mathcal{I})}}.
    \label{eq:competitive-ratio}
\end{equation}
We call \Alg $\alpha$-competitive if $\C{\Alg(\mathcal{I})}\le\alpha\,\C{\Opt(\mathcal{I})}$ for every $\mathcal{I}\in\Omega_T$.
Note that $\alpha$, by definition, is at least one, and a smaller ratio gives a stronger worst-case guarantee. We restrict our focus in this work to deterministic algorithms, since convexity of the objective implies that any randomized algorithm can be derandomized (i.e., finding a corresponding deterministic algorithm) without increasing its total cost. This can be seen by following the proof of Theorem~1 in~\cite{ElYaniv:01} and applying the convexity of the objective.

Following the related literature~\cite{ElYaniv:01, SunLee:21, Lorenz:09}, we assume:
\begin{assumption} \label{asm:priceBounds}
     All prices satisfy $p_t\in[L,U]$, where $0\le L\le U<\infty$ are known bounds.
\end{assumption}
Without this assumption, it is known that the competitive ratio of any online algorithm can be arbitrarily large, even in the setting without a peak charge~\cite{ElYaniv:01,Zhou:08}. In particular, the classic competitive guarantees in \owt depend on the price fluctuation ratio $U/L$~\cite{ElYaniv:01}; the optimal \owt competitive ratio is unbounded as $U/L\to\infty$, which occurs when $L\to0$ for fixed $U>0$, ruling out any finite competitive ratio in the setting without a peak charge. In \paowt, we allow $L=0$, as the peak charge forces the offline optimal strategy to incur a minimum peak charge $C/T$, and this is structure not present in \owt. 
Prices can otherwise vary arbitrarily within the range $[L, U]$, and thus the set of possible price instances is $\Omega_T=[L,U]^T$. We assume neither a stochastic price model nor access to future prices.

For one-way trading without a peak charge, prior work has considered both a \textit{known-horizon} setting~\cite{ElYaniv:01} where the player knows $T$ before the first time step, as well as an \textit{unknown-horizon} setting~\cite{Lin:19} where $T$ is not known beforehand and is only revealed at the start of the final (i.e., $T$th) time step to force a compulsory trade, if needed.

For the latter case of unknown horizon, we show the following impossibility result for \paowt:

\begin{restatable}{theorem}{unknownTLB}\label{thm:unknown-T-lb}
    If $L=0$, $C\le U$, and the horizon $T$ is unknown, any algorithm for \paowt has competitive ratio $\Omega(T)$.
\end{restatable}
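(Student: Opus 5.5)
The plan is an adaptive adversary that exploits a single fact: an algorithm that does not know $T$ makes exactly the same decisions on every price sequence sharing a common prefix, until the final step is announced. All prices will be $0$ or $U$. With $L=0$, a run of zero prices is ``free'' for \Opt except for its peak charge. Fix an arbitrary deterministic online algorithm. Feed it zero prices, and let $x_1,x_2,\dots$ be its purchases on this all-zero prefix. Write $S_t=\sum_{s\le t}x_s$ and $M_t=\max_{s\le t}x_s$.

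The adversary then has two families of continuations. In both, \Opt's cost is computed directly from \eqref{eq:paowt-model}.

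\emph{(a) Early termination at zero price.} The horizon is revealed to be $T=t+1$, with $p_{t+1}=0$. \Opt spreads its purchase evenly over the $T$ zero-price steps and pays $C/T$. \Alg's compulsory trade sets its peak to at least $\max\{M_t,1-S_t\}$, so
\[
\frac{\C{\Alg}}{\C{\Opt}}\ge T\max\{M_t,\,1-S_t\}.
\]

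\emph{(b) Price jump.} After $t$ zero-price steps, every remaining step (up to horizon $T>t$) has price $U$. Since $C\le U$, \Opt never benefits from buying at price $U$: moving an amount $\delta$ onto a $U$-priced step costs $U\delta$ but reduces the peak charge by at most $C\delta$. So \Opt pays $C/t$. \Alg must buy $1-S_t$ at price $U\ge C$, so
\[
\frac{\C{\Alg}}{\C{\Opt}}\ge \frac{t}{C}\bigl(CM_t+U(1-S_t)\bigr)\ge t\bigl(M_t+1-S_t\bigr).
\]

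Taking (a) and (b) together, the algorithm is trapped once the prefix is long. To keep (a) and (b) small it must make $S_t$ close to $1$, i.e.\ drive $1-S_t$ to $O(r/t)$ for target ratio $r$. But it must also keep $M_t$ small for \emph{every} later horizon $T$ it might face, and an all-zero instance of length $T\gg t$ has \Opt cost only $C/T$. Concretely, on an all-zero instance of length $T$,
\[
\frac{\C{\Alg}}{\C{\Opt}}\ge T\,M_{t}\ge T\,\frac{S_t}{t}\quad\text{for every } t<T.
\]
I would first combine this with (b) at a well-chosen $t$. If $S_t\le 1/2$, then (b) gives ratio $\ge t/2$ at horizon $t+1$. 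Otherwise, the all-zero instance gives ratio $\ge T/(2t)$.

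Choosing $t$ small (say $t=1$ or a constant) in the second case then yields linear growth in $T$. The algorithm must therefore either buy less than half by the first constant number of steps, or pay ratio $\Omega(T)$ on long all-zero instances. In the first case the adversary must instead exploit (a) and (b) to force linear growth.

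The main obstacle is this first case. Bound (b) gives ratio $t(1-S_t)$, which is linear in the horizon only if $S_t$ stays bounded away from $1$ for $t$ proportional to $T$. An algorithm that buys about $1/(2T_0)$ per step for a guessed $T_0$ defeats a naive choice of $t$. To handle it, I would use a geometric sequence of candidate horizons $t_k=2^k$. At each scale I would compare the bound $t_k(1-S_{t_k})$ from (b) with the bound $t_{k+1}M_{t_{k+1}}$ from the all-zero instance. Then I would use $S_{t_{k+1}}-S_{t_k}\le t_{k+1}M_{t_{k+1}}$ to argue that the ratio cannot stay below $cT$ at every scale: the increments of $S$ would have to sum to less than $1$. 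If this only produces $\Omega(\sqrt T)$ at first, I would refine the argument by letting the adversary also choose where the $U$-tail starts relative to the peak-setting step. In that refinement, the compulsory trade in (a) is priced at $U$ rather than $0$, which charges both $U(1-S_t)$ and the peak.
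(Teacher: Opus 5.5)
Your bounds are all correct: (a), (b), and the all-zero bound $\C{\Alg}/\C{\Opt}\ge T M_t\ge T S_t/t$. The proposal still never closes. You split on $S_t\le 1/2$ versus $S_t>1/2$ at a \emph{fixed constant} $t$. In the branch $S_t\le 1/2$, (b) then yields only the constant $t/2$. You hand that branch to a geometric-scales argument that you do not carry out, and that you expect may give only $\Omega(\sqrt T)$.

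The missing idea is the quantifier. Split instead on whether $S_{t'}>1/2$ for \emph{some} $t'$, or $S_t\le 1/2$ for \emph{every} $t$. In the first case $t'$ is determined by the algorithm alone and does not depend on $T$. So the all-zero instance of length $T$ has ratio at least $T/(2t')$ for every $T>t'$. In the second case, (b) with horizon $T=t+1$ gives ratio at least $(T-1)/2$ for every $T\ge 2$. That is the entire proof, and it is exactly the paper's dichotomy. The paper uses prefix prices $1/k$, with prefixes that differ across $k$; your single nested all-zero prefix works equally well and is slightly cleaner.

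Your scale-by-scale attempt stalls because it chases a stronger statement that is false: a bound $\operatorname{CR}_T(\Alg)\ge cT$ with $c$ independent of the algorithm. Take an algorithm that runs \rob with a guessed horizon $T_0$. On every length-$T_0$ instance it behaves exactly as the known-horizon algorithm does. By \autoref{thm:rob-ub} with $L=0$, it therefore has $\operatorname{CR}_{T_0}\le 2\sqrt{UT_0/C}$, which is far below $cT_0$ for large $T_0$. Hence the constant hidden in $\Omega(T)$ must be allowed to depend on the algorithm, as it does through $k'$ in the paper's proof.

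Your troublesome example, buying about $1/(2T_0)$ per step, is then handled directly. Take $t'\approx T_0$ and let $T\to\infty$ on all-zero instances. The proposed refinement with a $U$-priced compulsory trade is unnecessary.
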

\begin{proof}[Proof Sketch]
    Fix an arbitrary algorithm and consider $k$ rounds at price $1/k$, with $k$ large enough that $1/k\le U$. If the algorithm buys more than half a unit on some such prefix, its peak exceeds $1/(2k)$ and its cost is at least $(1+C)/(2k)$. The adversary then extends the horizon to an arbitrarily large $T$ with prices $L=0$. An offline plan buys uniformly over these $T-k$ rounds, at cost $C/(T-k)$. Hence the competitive ratio is at least $(1+C)(T-k)/(2kC)=\Omega(T)$ as $T$ grows with $k$ fixed: the algorithm cannot undo its earlier purchases or peak.

    Otherwise, the algorithm buys at most half a unit on every such prefix. The adversary announces round $k+1$ as the final round, using price $U$, forcing a purchase of at least half a unit and cost at least $U/2$. An offline plan instead buys $1/k$ in each preceding round, at total cost $(1+C)/k$. With $T=k+1$, the competitive ratio is at least $U(T-1)/(2(1+C))=\Omega(T)$. Thus, without horizon knowledge, the algorithm cannot avoid both risks: committing early leaves an irreversible cost if cheap rounds continue, while waiting risks a large compulsory trade.
\end{proof}

\noindent The full proof is in \autoref{app:prelim}; this result in particular implies that, since $T$ is unknown, the competitive ratio can be made arbitrarily large as $T \to \infty$. Moreover, as we show in the following proposition, this linear dependence on $T$ is tight in general: any purchasing strategy for \paowt has a competitive ratio $O(T)$.

\begin{proposition}\label{prop:unknown-T-ub}
    If $L=0$, any algorithm for \paowt has competitive ratio at most
    $$\frac{U+C}{C/T} = \frac{T(U+C)}{C} = O\left( T \right)$$
\end{proposition}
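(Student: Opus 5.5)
The plan is to bound the cost of an arbitrary feasible schedule from above and the optimal cost from below, and divide. Throughout, "any algorithm" means any algorithm whose schedule $\mathbf{x}\in[0,1]^T$ is feasible and does not overbuy, i.e.\ $\sum_t x_t = 1$. Buying beyond the requirement only raises cost, and the compulsory trade never needs to exceed the remaining deficit, so this restriction is natural. The same argument also covers $\sum_t x_t \le 1+O(1)$ with a constant-factor change.

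For the upper bound, fix any price sequence $\mathcal{I}\in[0,U]^T$ and any such schedule. Because $p_t\le U$ and $\sum_t x_t = 1$, the purchase cost is at most $U\sum_t x_t = U$. Because each $x_t\le 1$, the peak charge is at most $C\max_t x_t\le C$. Hence $\C{\Alg(\mathcal{I})}\le U + C$, and this holds regardless of whether the horizon is known.

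For the lower bound on \Opt, let $\mathbf{x}^*$ be the optimal schedule. Since $\sum_t x^*_t\ge 1$ and there are only $T$ rounds, averaging gives $\max_t x^*_t \ge 1/T$. Prices are nonnegative because $L=0\le p_t$, so the purchase term is nonnegative. Therefore
\begin{equation*}
\C{\Opt(\mathcal{I})}\ \ge\ C\max_t x^*_t\ \ge\ C/T .
\end{equation*}
This is exactly the "minimum peak charge $C/T$" remark in the preliminaries. Combining the two bounds gives a ratio of at most $(U+C)/(C/T) = T(U+C)/C$. For fixed $U$ and $C$ this is $O(T)$, and it matches \autoref{thm:unknown-T-lb} up to constants.

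The only delicate step is the normalization $\sum_t x_t = 1$. Without it, an adversarial "algorithm" buying $x_t=1$ at every round would pay up to $T(U+C)$. I would address this in a short remark: any online rule can be modified, without increasing its cost on any instance, to stop buying once the unit requirement is met and to cap each purchase at the remaining deficit. The proposition is then read for algorithms of this sensible form. Everything else is a direct one-line estimate.
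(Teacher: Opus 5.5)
Your proof is correct and matches the paper's argument: every non-overbuying schedule costs at most $U+C$, and \Opt pays a nonnegative purchase cost (since $L=0$) plus a peak charge of at least $C/T$. Your normalization remark makes explicit the assumption $\sum_t x_t = 1$, which the paper's one-line proof leaves implicit. There is one small slip in that remark: an algorithm buying $x_t=1$ in every round pays at most $TU+C$, not $T(U+C)$, because the peak charge is still capped at $C$.
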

\begin{proof}
This follows directly from the assumptions: any optimal solution must incur a purchasing cost of at least $L$ (where $L=0$), and a peak charge of at least $C/T$.  The worst-case total cost for any purchasing is at most $U+C$, giving the stated competitive ratio.
\end{proof}

Due to the pessimistic bounds in \autoref{thm:unknown-T-lb} and \autoref{prop:unknown-T-ub} in the unknown horizon setting, we make the following assumption:
\begin{assumption} \label{asm:knownT}
    Time horizon $T$ is known in advance.
\end{assumption}
This assumption is well motivated by target applications such as wholesale electricity purchasing, where the billing horizon is known in advance and typically spans one month.
Finally, it is clear that, for very large $C$, the peak charge will dominate the total cost, making it optimal to focus on simply minimizing the peak charge. In particular, we have the following lemma:

\begin{restatable}{lemma}{LargeCLB} \label{lem:largeC}
    When $C\ge (T-1)(U-L)$, uniform purchasing $x_t=\nicefrac{1}{T}$ for all $t$ is optimal for \paowt.
\end{restatable}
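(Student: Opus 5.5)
We argue directly on the offline problem \eqref{eq:paowt-model}: for any feasible $\mathbf{x}$ we lower-bound its cost by the cost of the uniform schedule, which is $\frac{1}{T}\sum_t p_t + \frac{C}{T}$. Without loss of generality $\sum_t x_t = 1$, since shrinking some coordinates to make the sum exactly one cannot increase any price term (prices are nonnegative) or the peak term.

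Write $M=\max_t x_t$ and $d_t = x_t - \frac{1}{T}$, so $\sum_t d_t = 0$ and $M \ge \frac1T$. The cost difference relative to uniform is
\begin{equation*}
\Delta = \sum_t p_t d_t + C\Bigl(M - \tfrac{1}{T}\Bigr).
\end{equation*}
Since $\sum_t d_t=0$, we may subtract $L$ from every price and write $\sum_t p_t d_t = \sum_t (p_t - L) d_t$. The terms with $d_t \ge 0$ contribute nonnegatively, so
\begin{equation*}
\sum_t (p_t-L)d_t \ \ge\ -(U-L)\sum_{t:d_t<0}|d_t| = -(U-L)\sum_{t:d_t>0} d_t .
\end{equation*}
Each positive $d_t$ satisfies $d_t \le M - \frac1T$. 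At least one coordinate has $d_t\le 0$ (else $\sum_t d_t>0$), so at most $T-1$ coordinates are positive. Hence $\sum_{t:d_t>0} d_t \le (T-1)(M-\frac1T)$.

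Combining the two bounds gives $\Delta \ge (M-\frac1T)\bigl(C-(T-1)(U-L)\bigr) \ge 0$ under the hypothesis $C\ge (T-1)(U-L)$. So uniform purchasing is no worse than any feasible schedule, hence optimal; it is also feasible, since $\frac1T\in[0,1]$.

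The main point requiring care is the counting step: the bound on the positive mass must use $T-1$ rather than $T$ coordinates, because that is exactly what gives the threshold $(T-1)(U-L)$. A rougher bound would only give the weaker condition $C\ge T(U-L)$. The reduction to $\sum_t x_t=1$ and the shift of prices by $L$ are routine.
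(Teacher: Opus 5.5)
Your proof is correct, and it takes a different route from the paper's. The paper first proves a structural reduction. After sorting the prices and fixing the peak $v$, the cheapest schedule fills the lowest-priced rounds greedily, and its cost is linear in $v$ on each interval $[1/k,1/(k-1)]$. Hence some optimal plan buys $1/k$ in each of the $k$ cheapest rounds. The paper then compares these $T$ candidate costs $\frac{1}{k}\bigl(\sum_{i\le k}p_i+C\bigr)$ with the uniform cost, using $C\ge k(U-L)$ for every $k\le T-1$.

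You bypass any characterization of the optimum and bound all feasible schedules at once, writing each as uniform plus a zero-sum deviation. The price savings are at most $(U-L)$ times the positive deviation mass, and that mass is at most $(T-1)(M-\frac{1}{T})$ because some coordinate must be nonpositive. Your route is shorter and gives a quantitative gap, $\Delta\ge (M-\frac{1}{T})\bigl(C-(T-1)(U-L)\bigr)$. For instance, when $C>(T-1)(U-L)$ this shows every non-uniform schedule buying exactly one unit is strictly worse.

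A closely related alternative is the dual certificate $\lambda=\frac{1}{T}\bigl(\sum_t p_t+C\bigr)$, $\beta_t=\lambda-p_t$ for \eqref{eq:offline-dual}. It has value equal to the uniform cost and is feasible whenever $\sum_{s\ne t}(p_t-p_s)\le C$ for all $t$, a condition implied by $C\ge(T-1)(U-L)$.

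The paper's route costs more work but yields a fact valid for every $C$: the offline optimum is always attained by an equal split over the $k$ cheapest rounds for some $k$. This is the same structure as in \sref{Lemma}{lem:opt-structure}.
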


\noindent The full proof is in \autoref{app:prelim}. Since a very large $C$ makes the problem trivial, we make the following assumption:

\begin{assumption} \label{asm:smallC}
   The peak charge price $C$ satisfies $C < (T-1)(U-L)$.
\end{assumption}

\noindent For the rest of the paper, we denote $\kappa:=\frac{U}{L+\frac{C}{T}}$. Here, $L+\frac{C}{T}$ is a lower bound on the cost of any feasible solution: purchasing one unit costs at least $L$, and the peak charge is at least $\frac{C}{T}$. This parameter plays a role analogous to the price fluctuation ratio $\frac{U}{L}$ in \owt.

%% file: sec-robust-alg.tex
In this section, we present a dual-threshold algorithm called \textbf{P}eak-\textbf{A}ware \textbf{W}ater Level \textbf{S}moothing (\rob) 
and analyze its competitive ratio.
We prove that \rob achieves a competitive ratio at most $2\sqrt{\kappa}$; a matching lower bound in \autoref{sec:lower-bound} establishes its order-optimality.

\subsection{Designing Peak-Aware Water Level Smoothing}
\rob (\sref{Algorithm}{alg:rob}) is designed to optimally balance the purchase cost and the peak charge. Our algorithm is inspired by the structural properties of optimal primal and dual solutions to the offline version of \paowt.
Specifically, when all prices are known, the formulation of \paowt in~\eqref{eq:paowt-model} can be rewritten as follows:
\begin{align}
\label{eq:offline-primal}
\min_{x_t,\, v\ge 0} \quad & \sum_{t\in[T]} p_t x_t + C v, \quad \text{s.t.} \sum_{t\in[T]} x_t \ge 1 \text{ and } x_t \le v, \ \forall t\in[T].
\end{align}
The dual of this problem introduces a multiplier $\beta_t\ge0$ for each constraint $x_t\le v$ and a multiplier $\lambda\ge0$ for the requirement to purchase at least one unit; this yields the dual problem:
\begin{align}
\label{eq:offline-dual}
    \max_{\beta_t,\,\lambda\ge 0} \quad & \lambda \quad \text{s.t.} \sum_{t\in[T]} \beta_t \le C \text{ and } p_t + \beta_t \ge \lambda, \ \forall t\in[T]
\end{align}
Let $\{x_t^*\}_{t\in[T]}$ and $v^*$ denote an optimal primal solution to~\eqref{eq:offline-primal}, and let $\{\beta_t^*\}_{t\in[T]}$ and $\lambda^*$ denote an optimal dual solution to~\eqref{eq:offline-dual}. These solutions can be characterized as follows:
\begin{lemma}
\label{lem:opt-structure}
The optimal dual variable $\lambda^*$ in~\eqref{eq:offline-dual} satisfies
\begin{align}
\label{eq:opt-lambda}
\sum\nolimits_{t\in[T]} (\lambda^* - p_t)_+ = C,
\end{align}
and $\beta_t^* = (\lambda^* - p_t)_+$ for all $t\in[T]$. An optimal primal solution to~\eqref{eq:offline-primal} is given by
\begin{align}\label{eq:offline-opt-solution}
x_t^* =
\begin{cases}
 \frac{1}{k}, & p_t \le \lambda^*,\\
 0, & \text{otherwise},
\end{cases}
\qquad
v^* = \frac{1}{k},
\end{align}
where $k:=|\{t\in[T]:p_t\le\lambda^*\}|$ is the number of time slots whose prices are no greater than $\lambda^*$.
\end{lemma}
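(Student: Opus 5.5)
The plan is to reduce the dual \eqref{eq:offline-dual} to a one-dimensional problem and then certify optimality by exhibiting a primal solution with equal objective value, so that weak duality closes the argument.

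\textbf{Step 1: solve the dual.} Fix $\lambda$. The cheapest feasible choice of the $\beta_t$ is $\beta_t=(\lambda-p_t)_+$, since each constraint $p_t+\beta_t\ge\lambda$ with $\beta_t\ge0$ forces exactly this lower bound. Hence $\lambda$ is dual-feasible if and only if
\[
g(\lambda):=\sum_{t\in[T]}(\lambda-p_t)_+\le C .
\]
The function $g$ is continuous and nondecreasing. It equals $0$ for $\lambda\le\min_t p_t$, and it is strictly increasing with slope at least $1$ once $\lambda>\min_t p_t$, with $g(\lambda)\to\infty$. Since $C>0$, there is a unique $\lambda^*$ with $g(\lambda^*)=C$, and the dual optimum is the largest feasible $\lambda$, which is this $\lambda^*$. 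This gives \eqref{eq:opt-lambda} and $\beta^*_t=(\lambda^*-p_t)_+$.

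\textbf{Step 2: verify the primal candidate.} Let $S=\{t:p_t\le\lambda^*\}$ and $k=|S|$. Because $g(\lambda^*)=C>0$, at least one price lies strictly below $\lambda^*$, so $k\ge1$. The candidate $x_t=1/k$ on $S$, $x_t=0$ otherwise, with $v=1/k$, is feasible: the purchases sum to $1$, each $x_t\le v$, and $x_t\le 1$. Its cost is
\[
\frac1k\sum_{t\in S}p_t+\frac{C}{k}.
\]
Substituting $C=\sum_{t\in S}(\lambda^*-p_t)=k\lambda^*-\sum_{t\in S}p_t$, which holds because terms with $p_t>\lambda^*$ contribute nothing, the cost equals exactly $\lambda^*$. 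Weak duality for the pair \eqref{eq:offline-primal}--\eqref{eq:offline-dual} then shows that both solutions are optimal. As a check, complementary slackness is consistent with this: $\beta^*_t>0$ only where $x_t=v$, and $\lambda^*>0$ matches the tight purchase constraint.

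\textbf{Main obstacle.} The step that needs care is Step 1, specifically existence and uniqueness of $\lambda^*$ and the fact that the dual maximum is attained at equality. Both follow from continuity and strict monotonicity of $g$ beyond $\min_t p_t$, together with $C>0$.

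A subtler point is ties. Prices exactly equal to $\lambda^*$ contribute $0$ to $g$, yet they are included in $S$. This is harmless: the substitution $C=k\lambda^*-\sum_{t\in S}p_t$ still holds, because each tied term adds $\lambda^*-p_t=0$ to $C$ and adds $\lambda^*$ and $p_t=\lambda^*$ respectively to the two sums on the right.

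Finally, the box constraint $x_t\le1$ in \eqref{eq:paowt-model} is not active at the candidate solution. Since the candidate is feasible for \eqref{eq:paowt-model} and achieves the relaxation's optimal value $\lambda^*$, the equivalence between \eqref{eq:paowt-model} and \eqref{eq:offline-primal} holds.
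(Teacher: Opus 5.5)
Your proof is correct; one small remark below. The paper states this lemma without proof, so there is no argument to match line by line, but both of your ingredients appear implicitly elsewhere in the paper.

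\textbf{The lower bound.} Your weak-duality certificate is exactly inequality~\eqref{ineq:threshold-opt-lb} in the proof of Theorem~\ref{thm:rob-ub}. For any feasible $\mathbf{x}$ with peak $v$, $\sum_t p_t x_t + Cv \ge \Phi\sum_t x_t + (C-F(\Phi))v \ge \Phi$ whenever $F(\Phi)\le C$. In other words, every $\lambda$ with $g(\lambda)\le C$ lower-bounds the primal.

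\textbf{The primal structure.} The appendix proof of Lemma~\ref{lem:largeC} takes a different, purely primal route, and its structural part is valid for any $C$. It fixes the peak $v$, fills the cheapest rounds greedily, and notes that the cost is linear in $v$ on each interval $[1/k,1/(k-1)]$. It concludes that some optimal plan buys $1/k$ in each of the $k$ cheapest rounds. That route gives the equal-split form, but $k$ must still be found by minimizing $\frac{1}{k}\bigl(\sum_{i\le k}p_{(i)}+C\bigr)$ over $T$ candidates. Your duality argument identifies $k$ and the optimal value $\lambda^*$ directly and supplies the matching lower bound, which is the form the rest of the paper uses.

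\textbf{The remark.} The lemma asserts that the optimal dual has $\beta_t^*=(\lambda^*-p_t)_+$, not merely that this choice is optimal, so you should say why it is forced. At $\lambda=\lambda^*$, the constraints $\beta_t\ge(\lambda^*-p_t)_+$ together with $\sum_t\beta_t\le C=\sum_t(\lambda^*-p_t)_+$ leave no slack in any coordinate. Similarly, your feasibility characterization drops the dual constraint $\lambda\ge 0$. This is harmless, since $\lambda^*>\min_t p_t\ge L\ge 0$.
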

The offline primal and dual solutions to \paowt have a different structure from those of the traditional one-way trading problem~\cite{ElYaniv:01}, in which an optimal primal solution purchases one unit in a single time step at the lowest price. Due to the peak charge, the optimal primal solution characterized above purchases only in time steps whose prices are no greater than the optimal dual price $\lambda^*$, with an equal amount $1/k$ purchased in each such slot. The optimal dual solution has a water-filling structure, with $\lambda^*$ serving as the water level.
However, determining the optimal water level $\lambda^*$ is challenging in the online setting. Even if $\lambda^*$ were known \emph{a priori}, one would still not know in advance the number of time slots in which to purchase, i.e., with prices at or below this level. 
Thus, there are two key challenges in designing an online algorithm for this problem: first, we must determine an appropriate estimate of the optimal water level, to obtain prices that are competitive with the optimal value; and second, we must decide how many times to purchase, so that we do not either significantly overpay on the peak charge, or miss out on a small number of good prices. 

Our algorithm \rob exploits this structure by setting a target water level $\Phi$ and using it to guide its purchasing decisions. In particular, at each time step $t$, we aim to purchase
\begin{align}
\label{eq:purchase}
    d_t = \frac{(\Phi - p_t)_+}{C},
\end{align}
provided that the remaining amount to purchase $r$ is at least $d_t$ and no compulsory purchase is needed.
This target purchase amount has two useful properties. First, if $\Phi = \lambda^* = \C{\Opt}$, then $\sum_{t\in[T]}(\Phi-p_t)_+/C=1$, so purchasing $d_t$ in every time slot would yield exactly one unit in total. Second, $d_t$ is larger when $p_t$ lies farther below the target water level $\Phi$, so the algorithm aims to purchase more at lower prices.

One issue with the purchase decision in~\eqref{eq:purchase} arises when $\Phi < \lambda^*$. In this case, $\sum_{t\in[T]} d_t < 1$, so compulsory purchases near the end of the time horizon may lead to a large peak charge. To avoid this, we set $m = \frac{\Phi-L}{C} \ge d_t , \forall t\in[T]$ as the purchase limit. At time $t$, if the remaining amount to purchase $r$ satisfies $r - (T-t) m > \min\{r,d_t\}$, the compulsory purchase phase begins with purchase $r - (T-t) m$, and then the algorithm purchases $m$ in each remaining time slot to ensure that the peak purchase does not exceed $m$.

In our main results, \rob sets the target price $\Phi$ to $\sqrt{U(L+\frac{C}{T})}$ (see the proof of \autoref{thm:rob-ub} for the intuition behind this choice).
Note that as $C \to 0$ (eliminating the peak charge), the rate limiter is non-binding (i.e., $m=1$, allowing \rob to purchase the entire unit in a single time step).

\begin{algorithm}[t]
\caption{$\rob_\Phi$ Algorithm}\label{alg:rob}
\begin{algorithmic}[1]
\Require target water level $\Phi$ 
\Ensure online decisions $\{x_t\}_{t\in[T]}$
\State \textbf{initialize:}
\State \hspace*{\algorithmicindent}\parbox[t]{.6\linewidth}{\raggedright $m \gets \min\{1, \frac{\Phi-L}{C}\}$} \Comment{Purchase Cap}
\State \hspace*{\algorithmicindent}\parbox[t]{.6\linewidth}{\raggedright $r \gets 1$} \Comment{Remaining Amount to Purchase}
\For{$t =1,\dots,T$}
\State$x_t \gets \max\big\{\min\{r, \frac{(\Phi-p_t)_+}{C}\}, r-(T-t)m\big\}$ 
\State $r \gets r-x_t$
\EndFor
\end{algorithmic}
\end{algorithm}

\subsection{Competitive Analysis}
We now prove an upper bound on the competitive ratio of \rob. The analysis determines the target water level $\Phi$ and the corresponding rate limiter $m$ based on problem parameters.

\begin{theorem}\label{thm:rob-ub}
    \rob is $2\sqrt{\kappa}$-competitive for \paowt when $\Phi=\sqrt{U(L+\frac{C}{T})}$.
\end{theorem}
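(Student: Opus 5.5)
We split into two cases according to how the offline water level $\lambda^*$ of \autoref{lem:opt-structure} compares with $\Phi=\sqrt{U(L+\frac{C}{T})}$. By strong LP duality between \eqref{eq:offline-primal} and \eqref{eq:offline-dual}, $\C{\Opt}=\lambda^*$. In addition, $\lambda^*\ge L+\frac{C}{T}$, since every feasible solution pays at least $L$ for the unit and at least $C/T$ in peak charge. Two identities drive the argument: $U/\Phi=\sqrt{\kappa}$ and $\Phi/(L+\frac{C}{T})=\sqrt{\kappa}$. We also use $\kappa\ge 1$, which we take to hold in the regime of interest.

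\emph{Step 1: the peak of \rob never exceeds $m$.} The greedy term $\min\{r,(\Phi-p_t)_+/C\}$ is at most $(\Phi-L)/C$, and it is at most $1$. Hence it is at most $m=\min\{1,\frac{\Phi-L}{C}\}$. For the compulsory term, let $t$ be the first round in which it binds. In round $t-1$ it did not bind, so after that round $r\le (T-t+1)m$. Then $x_t=r-(T-t)m\le m$. Thereafter the remaining amount is exactly $(T-t)m$, which is spread as $m$ per round. Consequently the peak charge of \rob is at most $Cm\le \Phi-L$.

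\emph{Case 1: $\lambda^*\ge\Phi$.} Every unit is bought at price at most $U$, so by Step 1 the total cost of \rob is at most $U+\Phi-L\le U+\Phi$. The ratio is then at most $(U+\Phi)/\Phi=\sqrt{\kappa}+1\le 2\sqrt{\kappa}$.

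\emph{Case 2: $\lambda^*<\Phi$.} We show that the compulsory branch never activates and that \rob buys only at prices at most $\Phi$. Because $(\Phi-p_t)_+\ge(\lambda^*-p_t)_+$, \eqref{eq:opt-lambda} gives $\sum_t d_t\ge 1$. Suppose that up to round $t$ the algorithm has bought exactly $d_s$ in each earlier round $s<t$ (it has not yet finished, and the compulsory branch has not fired). Then
\[
r=1-\sum_{s<t}d_s\le\sum_{s\ge t}d_s\le \min\{r,d_t\}+(T-t)m .
\]
The second inequality holds when $r>d_t$ because each $d_s\le m$; when $r\le d_t$ it holds trivially. So the condition $r-(T-t)m>\min\{r,d_t\}$ for entering the compulsory phase fails. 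By induction the compulsory phase never triggers, and every purchase has $x_t\le d_t$, which is positive only when $p_t<\Phi$. Therefore the purchasing cost is at most $\Phi$, and by Step 1 the total cost is at most $\Phi+(\Phi-L)\le 2\Phi$. The ratio is at most $2\Phi/(L+\frac{C}{T})=2\sqrt{\kappa}$.

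The main obstacle is the Case 2 induction. It must correctly handle the rate limiter and the truncation $\min\{r,\cdot\}$, and it must verify that once the unit is completed the algorithm buys nothing further. The bound on the first compulsory purchase in Step 1 also needs care, including the degenerate case in which the compulsory branch binds already at $t=1$; we would handle this by checking directly that $1\le Tm$ under \autoref{asm:smallC}.
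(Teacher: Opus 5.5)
Your proof is correct in substance, and it organizes the argument as the contrapositive of the paper's. The paper splits on whether \rob ever enters the compulsory phase. If it does, the algorithm's own trajectory certifies $F(\Phi)=\sum_t(\Phi-p_t)_+<C$, and the one-line weak-duality bound \eqref{ineq:threshold-opt-lb} then gives $\C{\Opt}\ge\Phi$. You instead split on the offline water level. When $\lambda^*\ge\Phi$ you invoke strong duality ($\C{\Opt}=\lambda^*$ via \autoref{lem:opt-structure}), and you show that $\lambda^*<\Phi$ (so $F(\Phi)\ge C$) rules out any compulsory trade. Both routes produce the same two bounds, $(U+Cm)/\Phi$ and $(\Phi+Cm)/(L+C/T)$. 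The paper's version is self-contained and needs no exact characterization of \Opt; yours makes the link to the offline water-filling structure explicit. Your Step~1 is a genuine gain in rigor: the paper asserts $\max_t x_t\le m$ without checking $Tm\ge 1$, which is exactly $\kappa\ge 1$ and so needs \autoref{asm:smallC}.

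There is one faulty justification in Case~2. You use ``each $d_s\le m$,'' but when $\Phi-L>C$ one has $m=1$ while $d_s$ can be as large as $(\Phi-L)/C>1$. This regime is allowed: for example, $L=0$, $C=1$, $T=100$, $U=10^4$ gives $\Phi=10$. In that regime $\sum_{s>t}d_s\le(T-t)m$ can fail. The conclusion still holds if you treat $m=1$ separately, as the paper does:
for $t<T$ the compulsory condition $r-(T-t)>\min\{r,d_t\}$ is impossible since $r\le 1$;
at $t=T$ it would require $r_T>d_T$, contradicting $r_T=1-\sum_{s<T}d_s\le d_T$.
Also, in the subcase $r\le d_t$, what holds trivially is the needed inequality $r\le\min\{r,d_t\}+(T-t)m$, not the middle inequality of your chain.
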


\begin{proof} For any instance $\mathcal{I}$ of the \paowt problem, we denote the cost of \rob using target price $\Phi$ by $\C{\rob_\Phi(\mathcal{I})}$. We first establish the offline comparison used when a compulsory trade occurs. For any $\Phi\ge0$, set $F(\Phi)=\sum_{t=1}^T(\Phi-p_t)_+$ and let $v=\max_t x_t$ for a feasible purchase vector $\mathbf{x}$. If $F(\Phi)\le C$, then
\begin{equation}
    \sum_{t=1}^T p_t x_t+Cv
    \ge \Phi\sum_{t=1}^T x_t+\bigl(C-F(\Phi)\bigr)v
    \ge \Phi.
    \label{ineq:threshold-opt-lb}
\end{equation}
Thus $\C{\Opt(\mathcal{I})}\ge\Phi$ whenever $F(\Phi)\le C$. We now consider two cases.

\paragraph{No Compulsory Trade} $\rob_\Phi$ is able to purchase its entire constraint at price at most $\Phi$ with peak charge at most $Cm$. By construction of $m$, $Cm = \min\{C,\Phi-L\}$, so $\C{\rob_\Phi(\mathcal{I})} \le \Phi+\min\{C,\Phi-L\}$. Since every feasible solution buys at least one unit at prices at least $L$ and has peak purchase at least $1/T$, $\C{\Opt(\mathcal{I})}\ge L+C/T$.

\paragraph{Compulsory Trade} If a compulsory trade occurs, $\rob_\Phi$ can be forced to purchase at a price as high as $U$, but its peak charge remains at most $Cm$. Therefore, $\C{\rob_\Phi(\mathcal{I})}\le U+\min\{C,\Phi-L\}$. To bound \Opt, consider the first round $t$ in which the compulsory term $r-(T-t)m$ exceeds the threshold purchase $(\Phi-p_t)_+/C$. All earlier purchases then follow the threshold, totaling $1-r$. When $m<1$, every later threshold purchase is at most $m$, and hence
\[
    \frac{F(\Phi)}{C}
    \le (1-r)+\frac{(\Phi-p_t)_+}{C}+(T-t)m <1.
\]
When $m=1$, a compulsory trade can occur only at $t=T$, so the same inequality holds without a future term. It follows from \eqref{ineq:threshold-opt-lb} that $\C{\Opt(\mathcal{I})}\ge\Phi$.

Combining these results, we see that the competitive ratio is at most

$$\frac{\C{\rob_\Phi(\mathcal{I})}}{\C{\Opt(\mathcal{I})}} \le \max \left\{ \frac{\Phi+\min\{C,\Phi-L\}}{L+\frac{C}{T}}, \frac{U+\min\{C,\Phi-L\}}{\Phi}\right\}$$

\noindent To balance the two cost bounds, we set $\Phi=\sqrt{U(L+\frac{C}{T})}$. Since $\Phi \ge \min \{C, \Phi-L\}$, we have:

\begin{align*}
    \frac{\C{\rob_\Phi(\mathcal{I})}}{\C{\Opt(\mathcal{I})}} &\le \max \left\{\frac{2 \Phi}{L+\frac{C}{T}}, 1+\frac{U}{\Phi}\right\},\\
    &\le \max \left\{2 \sqrt{\frac{U}{L+\frac{C}{T}}}, 1+\sqrt{\frac{U}{L+\frac{C}{T}}}\right\},\\
    &\le  2\sqrt{\frac{U}{L+\frac{C}{T}}}.
\end{align*}
Thus, \rob is $2\sqrt{\kappa}$-competitive when $\Phi=\sqrt{U(L+\frac{C}{T})}$ and $m=\min\{1,\frac{\Phi-L}{C}\}$.
\end{proof}

%% file: sec_preliminaries.tex
The preceding section gives an online algorithm with competitive ratio at most $2\sqrt{\kappa}$. 
Under the assumptions in \autoref{sec:prelims}, we ask the question: 
\emph{can any online algorithm achieve an asymptotically better competitive ratio?} 
In this section, we show that the answer is no:  any online algorithm for \paowt has competitive ratio $\Omega(\sqrt{\kappa})$. The proof uses two price sequences with a common prefix, forcing a trade-off between buying before prices rise and waiting for prices to fall.

\begin{theorem}\label{thm:known-T-lb}
    In the \paowt setting, under the assumptions~\ref{asm:priceBounds}-\ref{asm:smallC}, any online algorithm has competitive ratio at least 
    $\max\big\{1,\frac{1}{2\sqrt{3}}\sqrt{\kappa}\big\}$.
\end{theorem}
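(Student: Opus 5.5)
The plan is a two-scenario adversary with a common prefix, as announced at the start of this section. The bound $1$ is trivial, so the work is in showing $\Omega(\sqrt{\kappa})$. First I will note two consequences of Assumption~\ref{asm:smallC}. Since $C>0$, it forces $T\ge 2$. It also gives $C/T<U-L$, hence $U\ge L+\frac{C}{T}$ and $\kappa\ge 1$. The adversary fixes a price level $P\in[L+\frac{C}{T},U]$, to be tuned later with the guess $P=\sqrt{U(L+\frac{C}{T})}$. It presents a prefix of $n=\lfloor T/2\rfloor$ rounds, all at price $P$. Let $x$ be the total amount the (deterministic) algorithm buys during the prefix.

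\emph{Scenario A (prices fall).} The remaining $T-n\ge T/2$ rounds all have price $L$. The algorithm pays at least $xP$ on the prefix. A feasible offline plan buys $1/(T-n)$ in each of the last rounds, at cost at most $L+\frac{2C}{T}\le 2(L+\frac{C}{T})$. So the ratio is at least $\frac{xP}{2(L+C/T)}$.

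\emph{Scenario B (prices spike).} The remaining rounds all have price $U$. The algorithm must buy $1-x$ at price $U$, so it pays at least $(1-x)U$. The offline plan buys $1/n$ in each prefix round, at cost at most $P+\frac{C}{n}\le P+\frac{3C}{T}$. Here I use $n\ge T/3$ for $T\ge2$, or simply $n\ge (T-1)/2$ with a short check of small $T$. Since $P\ge L+\frac{C}{T}\ge \frac{C}{T}$, this cost is $O(P)$, and the ratio is at least $\frac{(1-x)U}{cP}$ for a small constant $c$.

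The algorithm controls only $x$, so the competitive ratio is at least $\min_{x\in[0,1]}\max\bigl\{\frac{xP}{2(L+C/T)},\frac{(1-x)U}{cP}\bigr\}$. With $P=\sqrt{U(L+C/T)}$, both terms become $\sqrt{\kappa}$ times $x/2$ and $(1-x)/c$ respectively. Equating them gives a ratio of $\frac{1}{2+c}\sqrt{\kappa}$. The main obstacle is the bookkeeping of constants, namely the parity of $T$ and the bound on $C/n$ relative to $P$, so as to land exactly on the stated $\frac{1}{2\sqrt{3}}$.

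To get that constant I would first try not fixing $P$ at the geometric mean. Instead I would treat $P$ and the prefix length as free parameters, keeping the exact offline costs $L+\frac{C}{T-n}$ and $P+\frac{C}{n}$ rather than the loose bounds above, and optimize. A factor like $\sqrt{3}$ suggests that an optimized choice of $P$ balances a term of the form $P+\frac{2C}{T}\le 3P$-type slack against the Scenario A denominator. Even if that exact constant does not come out, the argument above already yields an $\Omega(\sqrt{\kappa})$ bound with an explicit constant. Combined with Theorem~\ref{thm:rob-ub}, this shows that \rob is order-optimal.
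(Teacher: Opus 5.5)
\textbf{Gap: the stated constant is not reached.} Your adversary is the same two-scenario, common-prefix construction the paper uses, and each inequality you write is valid. As written, though, the argument proves only $\max\{1,\sqrt{\kappa}/6\}$. With $c=4$, your min--max value is $\frac{1}{2+c}\sqrt{\kappa}=\frac16\sqrt{\kappa}$, strictly below the claimed $\frac{1}{2\sqrt3}\sqrt{\kappa}$, and your final paragraph leaves the stated constant as a hope rather than a derivation.

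The decisive loss is in Scenario B. There you lower-bound the algorithm's cost by $(1-x)U$, discarding the $xP$ it already paid on the prefix; the crude estimate $P+C/n\le 4P$ loses more. Re-optimizing $P$ and the prefix length, as you propose, is not the crux. With $xP$ still discarded, the best $P$ gives $U/(2\sqrt{U\delta}+C/n)$, where $\delta$ is the fall-scenario offline cost. Your estimates $\delta\le 2(L+C/T)$ and $C/n\le 3C/T$ then give only $\sqrt{\kappa}/(2\sqrt2+3/\sqrt{\kappa})$. That is below the target for $12<\kappa<22$, exactly where the trivial bound $1$ no longer covers it.

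\textbf{The missing step.} Keep $xP$, so the spike-scenario ratio is $\frac{U-(U-P)x}{P+C/n}$. The paper then makes the prefix the \emph{longer} half, $n=\lceil T/2\rceil$, with suffix $k=\lfloor T/2\rfloor$. It sets $\delta=L+C/k$ and $P=\sqrt{U\delta}$, and computes exactly
\[
\min_{x\in[0,1]}\max\Bigl\{\frac{xP}{\delta},\,\frac{U-(U-P)x}{P+C/n}\Bigr\}=\frac{U}{2\sqrt{U\delta}+C/n-\delta}.
\]
The $-\delta$ in the denominator comes from the retained $xP$. It cancels the offline peak surcharge, because $n\ge k$ gives $C/n\le C/k\le\delta$. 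Hence the ratio is at least $\frac12\sqrt{U/\delta}$. Since $T/k\le 3$ for $T\ge 2$, we have $\delta\le 3(L+C/T)$, which yields $\frac{1}{2\sqrt3}\sqrt{\kappa}$.

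Anchoring $P$ at $\sqrt{U\delta}$ has one cost: this price is admissible only when $\delta\le U$. If $\delta>U$, then $\kappa<3$, the claimed bound is below $\frac12$, and the trivial bound $1$ suffices.
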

\begin{proof}
    The lower bound of 1 comes directly from the definition of the competitive ratio. We now consider the $\frac{1}{2\sqrt{3}}\sqrt{\kappa}$ case. 

    The argument uses two price sequences with a common prefix. Because an online algorithm cannot see future prices, any algorithm $\alg$ will make the same purchasing decisions for the common prefix of both sequences, resulting in a performance trade-off across the suffixes. Let $n = \lceil \frac{T}{2}\rceil$, $k = \lfloor \frac{T}{2}\rfloor$, and define sequences like so:
    
    $$\mathcal{I}_1 = \underbrace{p,\dots,p}_{n},\underbrace{U,\dots,U}_{k} \quad \quad \mathcal{I}_2 = \underbrace{p,\dots,p}_{n},\underbrace{L,\dots,L}_{k}$$
    where $p = \sqrt{U\delta} \in[L,U]$, for $\delta = L+\frac{C}{k}$. Note that $\frac{C}{n} \le \delta$ and $\delta\le 3(L+\frac{C}{T})$. We consider two cases based on the relative size of $\delta$ and $U$.

    \paragraph{Case: $\delta \le U$} Suppose \Alg purchases a total volume of $s$ across the prefix of $n$ prices $p$. 
    
    On $\mathcal{I}_1$, purchasing $1/n$ in each of the first $n$ rounds shows that $\C{\Opt}\le p+C/n$, while any feasible \Alg pays at least $ps+U(1-s)$. On $\mathcal{I}_2$, purchasing $1/k$ in each of the final $k$ rounds shows that $\C{\Opt}\le L+C/k=\delta$, while \Alg pays at least $ps$. These feasible offline schedules suffice; they need not be optimal.

    $$\frac{\C{\Alg}}{\C{\Opt}} \ge \min\limits_{0\le s\le 1} \max\left\{\frac{ps}{\delta},\frac{U-(U-p)s}{p+C/n}\right\} = \frac{U}{2\sqrt{U\delta}+C/n-L-C/k}$$

    Since $\nicefrac{C}{n}\le \delta \le 3(L+\nicefrac{C}{T})$, we can lower bound the expression by 

    $$\frac{\C{\Alg}}{\C{\Opt}} \ge \frac{U}{2\sqrt{U\delta}}\ge \frac{1}{2\sqrt{3}}\sqrt{\frac{U}{L+C/T}}$$
 
    \paragraph{Case: $\delta > U$} Here we can ignore the price sequences and recognize the premise implies that $\nicefrac{U}{(L+C/T)} \le 3$. Thus
    $$\frac{1}{2\sqrt{3}}\sqrt{\frac{U}{L+C/T}}\le \frac{1}{2}$$

    Since the competitive ratio is at least 1, it is lower bounded by $\frac{1}{2\sqrt{3}}\sqrt{\frac{U}{L+C/T}}$.
\end{proof}

Compared to \owt (without a peak charge), the lower bound for \paowt intuitively depends on the additional peak charge $C$.  As $C \rightarrow 0$, the \paowt lower bound smoothly recovers the optimal $\Omega\big(\sqrt{{U}/{L}}\big)$ competitive bound for the minimization version of \owt~\cite{Lechowicz:24OCS, Lorenz:09}.

Together, \autoref{thm:rob-ub} and \autoref{thm:known-T-lb} establish that the optimal competitive ratio for \paowt is $\Theta(\sqrt{\kappa})$, attained by \rob up to constant factors.

%% file: sec_learning-aug-algs.tex
To go beyond the worst-case analysis, in this section, we aim to further investigate how to make use of machine-learned predictions to design learning-augmented algorithms for \paowt. 
Suppose that an algorithm receives predictions of the form $\adv = (\hat{x}_1,\dots, \hat{x}_T)$, where $\hat{x}_t \in [0,1]$ denotes a prediction of the optimal purchase \textit{decision} at time $t$.
We aim to design an algorithm that uses the decision advice $\adv$ to improve the average-case performance for \paowt when $\adv$ is of good quality, while retaining a worst-case bound.

To investigate this, we evaluate learning-augmented algorithms~\cite{Lykouris:18, Purohit:18} using two metrics: \textit{consistency}, which compares the algorithm's cost with that of the decision advice; and \textit{robustness}, which bounds its competitive ratio regardless of the quality of the advice.
Given a feasible advice $\adv$ for instance $\mathcal{I}$, we let $\Alg(\mathcal{I},\adv)$ denote the online decision of learning-augmented algorithm. Given a parameter $\epsilon > 0$, we say the algorithm is $(1+\epsilon)$-consistent and $\gamma(\epsilon)$-robust if for every instance $\mathcal{I}$ and feasible advice $\adv$,
\begin{align*}
    &\C{\Alg(\mathcal{I},\adv)} \le (1+\epsilon) \C{\adv},\\ 
    &\C{\Alg(\mathcal{I},\adv)} \le \gamma(\epsilon) \C{\Opt(\mathcal{I})}.
\end{align*}
In particular, when $\adv$ is the optimal solution, the consistency guarantees implies a competitive ratio of at most $1+\epsilon$.

\subsection{Lower Bound on Consistency-Robustness Trade-off}

We begin by proving a result that gives a fundamental limit on the trade-off between consistency and robustness. In particular, we have the following lower bound:

\begin{theorem}\label{thm:advice-gen-lb}
    Any learning-augmented algorithm for \paowt that is $(1+\epsilon)$-consistent must be at least $\gamma(\epsilon)$-robust, where $\gamma(\epsilon) \ge \max\left\{1,\frac{1}{2\sqrt{3}}\sqrt{\kappa},\frac{\kappa}{12(1 + \epsilon)}\right\}$.
\end{theorem}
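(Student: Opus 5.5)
The three terms of the maximum are handled separately. The bound of $1$ holds by definition of the competitive ratio. The bound $\frac{1}{2\sqrt{3}}\sqrt{\kappa}$ follows from \autoref{thm:known-T-lb}. Robustness is a worst-case competitive ratio over all advice, so the learning-augmented algorithm run with a fixed (e.g.\ useless) advice sequence is an ordinary online algorithm, and the lower bound applies. The only new work is the term $\frac{\kappa}{12(1+\epsilon)}$, which I obtain from a two-sequence adversary in the style of \autoref{thm:known-T-lb}, with the advice now chosen to pin down the prefix behaviour.

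With $n=\lceil T/2\rceil$, $k=\lfloor T/2\rfloor$ and $\delta=L+C/k\le 3(L+C/T)$, I use a common prefix of $n$ rounds followed by either $k$ rounds at $U$ (instance $\mathcal{I}_1$) or $k$ rounds at $L$ (instance $\mathcal{I}_2$). The prefix price is set to $p=U$ if this is admissible; otherwise any price in $[L,U]$ that keeps the cost comparison below works.

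On $\mathcal{I}_2$ the adversary supplies the advice that buys $1/k$ in each of the last $k$ rounds, at cost $\delta$. Since the prefix is identical and the advice could equally be issued for $\mathcal{I}_2$, I take the advice to be the same sequence on both instances. Decision advice is revealed online, so the algorithm sees identical inputs (prices and advice) through the first $n$ rounds and must make the same prefix purchase $s$ on both.

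Consistency on $\mathcal{I}_2$ forces $ps+\dots\le(1+\epsilon)\delta$, so $s\le(1+\epsilon)\delta/p$. On $\mathcal{I}_1$ the same advice is still feasible and is now bad, but only robustness matters there. The algorithm must buy the remaining $1-s$ units over the $k$ rounds at price $U$, paying at least $U(1-s)$. Meanwhile $\Opt(\mathcal{I}_1)\le p+C/n$, obtained by buying uniformly over the prefix.

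Taking $p=\delta$ (admissible when $\delta\le U$) is the cleanest choice. Then $s\le 1+\epsilon$ is vacuous, so that choice fails, and I instead take $p$ between $\delta$ and $U$. Writing $s\le(1+\epsilon)\delta/p$, the robustness ratio on $\mathcal{I}_1$ is at least
\[
\frac{U\bigl(1-(1+\epsilon)\delta/p\bigr)}{p+C/n}.
\]
Choosing $p=2(1+\epsilon)\delta$ (when this is at most $U$) gives $s\le 1/2$ and ratio at least $\frac{U/2}{2(1+\epsilon)\delta+\delta}\ge \frac{U}{6(1+\epsilon)\delta}\ge\frac{\kappa}{18(1+\epsilon)}$, using $C/n\le\delta$ and $\delta\le3(L+C/T)$.

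The main obstacle is tuning the constants to reach exactly $12$ rather than $18$. A sharper choice is to put $p$ closer to $(1+\epsilon)\delta$ and optimize the resulting expression in $p$, or to exploit that on $\mathcal{I}_2$ the algorithm's peak charge also counts toward its cost, which tightens the bound on $s$. I would try optimizing $p$ first. The residual case $2(1+\epsilon)\delta>U$ is handled like the second case of \autoref{thm:known-T-lb}: there $\kappa/(12(1+\epsilon))\le 3\cdot2(1+\epsilon)/(12(1+\epsilon))=1/2<1$, so the trivial bound of $1$ dominates.
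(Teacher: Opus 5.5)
Your two-instance construction is a sound mirror image of the paper's, but as written it proves only $\gamma\ge\kappa/(18(1+\epsilon))$, not the stated $\kappa/(12(1+\epsilon))$. The fix you would try first, tuning $p$, cannot close this gap.

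The loss is in the consistency step. The constraint $ps\le(1+\epsilon)\delta$ discards everything the algorithm must still pay in the suffix of $\mathcal{I}_2$. With only that constraint, $p=a\delta$ and $\C{\Opt(\mathcal{I}_1)}\le p+C/n$, the argument gives at best $\frac{U}{\delta}\cdot\frac{a-b}{a(a+C/(n\delta))}$ with $b=1+\epsilon$, up to an additive term bounded independently of $U/\delta$. Take $T=3$, $L=0$. Then $n=2$, $k=1$ and $\delta=C=3(L+C/T)$ exactly, so you need $U/(4b\delta)$. But $\max_a\frac{a-b}{a(a+1/2)}=\frac{\sigma}{(b+\sigma)(b+\sigma+1/2)}$ with $\sigma=\sqrt{b^2+b/2}$, and this is strictly below $\frac{1}{4b}$ for every $b\ge 1$ (about $0.20$ at $b=1$). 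So as $C/U\to0$, no choice of $p$ reaches $12$.

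Your second idea, counting the peak charge on $\mathcal{I}_2$, is the one that works, and it gives exactly the paper's constant.

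\emph{Consistency step.} For $s\le 1$, the algorithm must buy at least $1-s$ in the last $k$ rounds of $\mathcal{I}_2$, at price $L$ and with peak at least $(1-s)/k$. Hence $\C{\Alg(\mathcal{I}_2)}\ge ps+L(1-s)+C(1-s)/k=\delta+(p-\delta)s$. Consistency against the advice cost $\delta$ then gives $s\le\epsilon\delta/(p-\delta)$.

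\emph{Choice of $p$.} Take $p=(1+2\epsilon)\delta$, using $\epsilon>0$. This yields $s\le 1/2$, and $s>1$ is ruled out since then the cost is at least $p>(1+\epsilon)\delta$.

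\emph{Robustness step.} We have $\C{\Opt(\mathcal{I}_1)}\le p+C/n\le 2(1+\epsilon)\delta$. The ratio on $\mathcal{I}_1$ is then at least $\frac{U/2}{2(1+\epsilon)\delta}\ge\frac{\kappa}{12(1+\epsilon)}$.

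\emph{Residual case.} If $(1+2\epsilon)\delta>U$, then $\kappa\le 3U/\delta<3(1+2\epsilon)$, so the term is below $1/2$.

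You should also drop the exploratory remarks about $p=U$ and $p=\delta$.

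For comparison, the paper runs the construction in reverse. Its advice buys $1/n$ per prefix round, and $p=\frac{U}{2(1+\epsilon)}-\frac{C}{n}$ is calibrated against $U$ so that the advice costs exactly $\frac{U}{2(1+\epsilon)}$ on $\mathcal{I}_1$. Consistency then caps the algorithm at $U/2$, which forces prefix volume $z\ge 1/2$ with no extra bookkeeping. $\mathcal{I}_2$ then charges $(p+C/n)z\ge\frac{U}{4(1+\epsilon)}$ against $\Opt\le\delta$. In the paper, consistency forces commitment that falling prices punish; in yours, it forces waiting that rising prices punish. Once the suffix cost on $\mathcal{I}_2$ is counted, the two routes give the same constant.
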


\begin{proof}
    The first two terms follow from \autoref{thm:known-T-lb}: with arbitrary predictions, a learning-augmented algorithm cannot guarantee a better competitive ratio than the prediction-free lower bound. We therefore prove the remaining bound, $\gamma\ge \frac{U}{12(1+\epsilon)(L+\frac{C}{T})}$.

    If $\frac{U}{L+\frac{C}{T}}<4(1+\epsilon)$, this term is less than $\frac{1}{3}$ and is already covered by the first two terms. Otherwise, suppose $\frac{U}{L+\frac{C}{T}}\ge4 (1+\epsilon)$. We construct two instances with the same prices and advice during an initial block of rounds. On the first instance, the advice is optimal, so consistency requires the algorithm to buy a substantial amount during this block. On the second, prices fall after the block, but the algorithm has already made the same purchases.

    Let $n=\lceil \frac{T}{2}\rceil$, $k=\lfloor \frac{T}{2}\rfloor$, and $p=\frac{U}{2(1+\epsilon)}-\frac{C}{n}$. Since $(1+\epsilon)\ge1$ and $n\ge \frac{T}{2}$, the assumption that $\frac{U}{L+\frac{C}{T}}\ge4(1+\epsilon)$ gives
    \[
        p\ge 2\left(L+\frac{C}{T}\right)-\frac{C}{n}\ge L,
        \qquad p\le \frac{U}{2(1+\epsilon)}\le U.
    \]
    Thus all prices in the following instances are admissible:
    \[
        \mathcal{I}_1=
        \underbrace{p,\ldots,p}_{n\text{ }}, 
        \underbrace{U,\ldots,U}_{k\text{ }}
        \qquad
        \mathcal{I}_2=
        \underbrace{p,\ldots,p}_{n\text{ }}, 
        \underbrace{L,\ldots,L}_{k\text{ }}
    \]
    On both instances, a learning-augmented algorithm receives the same decision advice: buy $\frac{1}{n}$ in each of the first $n$ rounds and nothing thereafter. Let $z$ denote its total purchases during the shared initial block. Because the algorithm sees the same prices and advice before the last $k$ rounds, this value of $z$ is the same on both instances.

    \paragraph{Instance 1: correct advice forces early purchases.}
    Following the advice on $\mathcal{I}_1$ costs $p+\frac{C}{n}=\frac{U}{2(1+\epsilon)}$. This schedule is optimal, not just feasible. To see why, consider any schedule buying exactly one unit, with volume $s$ in the first $n$ rounds. Its peak is at least $\frac{s}{n}$, so its total cost is at least
    \[
        ps+U(1-s)+\frac{Cs}{n}
        =\left(p+\frac{C}{n}\right)s+U(1-s)
        \ge p+\frac{C}{n}.
    \]
    The last inequality holds because $p+\frac{C}{n}=\frac{U}{2(1+\epsilon)}\le U$. Since prices and the peak charge are nonnegative, buying more than one unit cannot improve the optimum. Hence $\C{\Opt(\mathcal{I}_1)}=\frac{U}{2(1+\epsilon)}$, and $(1+\epsilon)$-consistency requires
    \[
        \C{\Alg(\mathcal{I}_1)}
        \le (1+\epsilon)\,\C{\Opt(\mathcal{I}_1)}
        =\frac{U}{2}.
    \]
    But the algorithm must buy any volume still missing after the initial block at price $U$, giving $\C{\Alg(\mathcal{I}_1)}\ge U(1-z)$. Combining these inequalities yields $z\ge\frac{1}{2}$.

    \paragraph{Instance 2: the same purchases are costly when prices fall.}
    On $\mathcal{I}_2$, the algorithm again buys $z\ge\frac{1}{2}$ during the initial block, even though the final $k$ rounds now offer price $L$. These purchases alone cost $pz$ and establish a peak of at least $\frac{z}{n}$. Therefore,
    \[
        \C{\Alg(\mathcal{I}_2)}
        \ge pz+\frac{Cz}{n}
        =\left(p+\frac{C}{n}\right)z
        \ge \frac{U}{4(1+\epsilon)}.
    \]
    An offline plan can instead wait and buy $\frac{1}{k}$ in each of the final $k$ rounds. This gives the upper bound
    \[
        \C{\Opt(\mathcal{I}_2)}
        \le L+\frac{C}{k}
        \le 3\left( L+\frac{C}{T} \right),
    \]
    where the last inequality uses $\frac{T}{k}\le3$ for every $T\ge2$. Robustness must cover this instance even when the advice is incorrect, so
    \[
        \gamma
        \ge \frac{\C{\Alg(\mathcal{I}_2)}}{\C{\Opt(\mathcal{I}_2)}}
        \ge \frac{U}{12(1+\epsilon)(L+\frac{C}{T})} = \frac{\kappa}{12(1+\epsilon)}.
    \]
    Thus, stronger consistency forces more commitment to the shared initial block, limiting robustness when the same advice is wrong.
\end{proof}

Using the optimal consistency-robustness trade-off shown in \autoref{thm:advice-gen-lb} as a benchmark, we next consider how an algorithm can use predictions to improve performance.

\subsection{Naive Convex Combination Algorithm}

Given decision advice $\adv$, a natural learning-augmented algorithm is to apply a linear combination of some robust decisions (e.g., from \rob) and the decision advice~\cite{Lechowicz:24OCS}. Given a linear combination parameter $\mu \in [0,1]$, we define an algorithm \mix that combines robust decisions from \rob and the decision advice in \sref{Algorithm}{alg:mix}.

\begin{algorithm}
\caption{\mix Algorithm}\label{alg:mix}
\begin{algorithmic}[1]
\Require consistency parameter $\epsilon\ge0$, $\alpha$-competitive algorithm $\rob_\Phi$
\Ensure online decisions $\{x_t\}_{t\in[T]}$
\State \textbf{initialize:}
\State \hspace*{\algorithmicindent}\parbox[t]{.6\linewidth}{\raggedright $r \gets 1$} \Comment{Remaining amount to purchase}
\State \hspace*{\algorithmicindent}\parbox[t]{.6\linewidth}{\raggedright $\mu \gets 1 - \frac{\epsilon}{\alpha - 1}$}
\Comment{Setting combination parameter}
\For{$t =1,\dots,  T$}
\State observe price $p_t$ and advice $\hat{x}_t$
\State $\bar{x}_t \gets \texttt{$\rob_\Phi$\_Allocation}(r,t)$
\State $x_t \gets \mu \hat{x}_t+(1-\mu)\bar{x}_t$ 
\EndFor
\end{algorithmic}
\end{algorithm}

We now analyze the consistency and robustness of \mix:

\begin{restatable}{theorem}{mixAlg}\label{thm:mixAlg}
For a given $\epsilon > 0$, \mix is $(1 +\epsilon)$-consistent and $\big((1 - \frac{\epsilon}{\alpha - 1})\kappa (1 + \frac{C}{U}) + \frac{\epsilon\alpha}{\alpha - 1}\big)$-robust. %
\end{restatable}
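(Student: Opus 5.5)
The plan is to use convexity of the objective. Write $f(\mathbf{x})=\sum_t p_t x_t + C\max_t x_t$. The purchase-cost term is linear and the peak term is a norm, so $f$ is convex. The \mix decision vector is $\mathbf{x}=\mu\hat{\mathbf{x}}+(1-\mu)\bar{\mathbf{x}}$, where $\bar{\mathbf{x}}$ is the vector of $\rob_\Phi$ allocations.

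First I check feasibility. Since $\hat{\mathbf{x}}$ is feasible advice ($\sum_t\hat x_t\ge1$) and $\bar{\mathbf{x}}$ completes its unit by time $T$ (including the compulsory phase), the combination also satisfies $\sum_t x_t\ge1$.

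Next, convexity of $f$ gives
\[
\C{\mix}\le \mu\,\C{\adv}+(1-\mu)\,\C{\bar{\mathbf{x}}}.
\]
One caveat is that $\bar{\mathbf{x}}$ is generated by \texttt{$\rob_\Phi$\_Allocation}$(r,t)$, which may depend on the remaining amount $r$. I would interpret it as the standalone $\rob_\Phi$ run on the same prices, so that $\C{\bar{\mathbf{x}}}=\C{\rob_\Phi(\mathcal{I})}\le\alpha\,\C{\Opt}$ by \autoref{thm:rob-ub}. If it instead tracks the combined $r$, I would argue it is still a feasible online schedule with peak at most $m$ and cost bounded as in \autoref{thm:rob-ub}.

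\textbf{Consistency.} Use $\C{\Opt}\le\C{\adv}$ to get
\[
\C{\mix}\le\bigl(\mu+(1-\mu)\alpha\bigr)\C{\adv}.
\]
With $\mu=1-\tfrac{\epsilon}{\alpha-1}$ we have $(1-\mu)(\alpha-1)=\epsilon$, so the factor equals $1+\epsilon$.

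\textbf{Robustness.} The advice term must be bounded against \Opt with no assumption on advice quality. Any feasible advice that buys no more than necessary has $\C{\adv}\le U+C$. Together with $\C{\Opt}\ge L+C/T$, this gives
\[
\C{\adv}\le \frac{U+C}{L+C/T}\C{\Opt}=\kappa\Bigl(1+\frac{C}{U}\Bigr)\C{\Opt}.
\]
Hence
\[
\C{\mix}\le\Bigl(\mu\kappa\bigl(1+\tfrac{C}{U}\bigr)+(1-\mu)\alpha\Bigr)\C{\Opt},
\]
and substituting $\mu=1-\tfrac{\epsilon}{\alpha-1}$ and $1-\mu=\tfrac{\epsilon}{\alpha-1}$ yields exactly the stated robustness.

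\textbf{Main obstacle.} The bound $\C{\adv}\le U+C$ holds only if the advice buys at most one unit in total, i.e.\ advice entries in $[0,1]$ summing to one. I would either impose $\sum_t\hat x_t=1$ as part of ``feasible advice'' or have \mix truncate advice once one unit is reached; truncation only lowers $f$, since prices are nonnegative and the max is monotone. I expect this step, together with pinning down exactly what $\bar{\mathbf{x}}$ is in the combined run, to need the most care. The rest is direct substitution.
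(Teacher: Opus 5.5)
Your proposal is correct and follows the paper's proof step for step. The steps are: feasibility of the convex combination; the convexity (subadditivity of the max) bound $\C{\texttt{MIX}}\le\mu\,\C{\adv}+(1-\mu)\alpha\,\C{\Opt}$; consistency via $\C{\Opt}\le\C{\adv}$ and $(1-\mu)(\alpha-1)=\epsilon$; and robustness via $\C{\adv}\le U+C$ together with $\C{\Opt}\ge L+C/T$. You flag two caveats: $\bar{\mathbf{x}}$ must be the standalone $\rob_\Phi$ run (your hedged fallback for the tracked-$r$ reading is not established, so rely on this reading), and $\C{\adv}\le U+C$ needs the advice to buy at most one unit. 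The paper's proof uses both implicitly without stating them, so stating them is a genuine clarification rather than a different route.
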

Note that when $\epsilon \to 0$, the robustness of the \mix is approaching $\kappa (1 + \frac{C}{U})$, where $\frac{C}{U}$ can be of order $O(T)$. 
Therefore, simple \texttt{MIX} cannot achieve the consistency-robustness lower bound shown in \autoref{thm:advice-gen-lb}. 
\begin{proof}
First, the online solutions $\{\bar{x}\}_{t\in[t]}$ from $\rob_\Phi$ and the advice $\{\hat{x}_t\}_{t\in[T]}$ are feasible solutions, and thus the solution $x_t = \mu \hat{x}_t + (1 - \mu \bar{x}) $ from \texttt{MIX} is also feasible, i.e.,
\begin{align*}
    \sum_{t\in[T]} x_t = \mu \sum_{t\in[T]}  \hat{x}_t + (1 - \mu) \sum_{t\in[T]} \bar{x}_t \ge \mu \cdot 1 + (1 - \mu) \cdot 1 = 1.
\end{align*}

The total cost of the \texttt{MIX} is 
\begin{align*}
\C{\texttt{MIX}} &= \sum_{t\in[T]} p_t x_t + C \max_{t\in[T]} x_t\\
&\le \mu \sum_{t\in[T]} p_t \hat{x}_t + (1 - \mu) \sum_{t\in[T]} p_t \bar{x}_t + \mu C \max_{t} \hat{x}_t + (1 - \mu) C \max_{t} \bar{x}_t\\
& \le \mu \C{\adv} + (1 - \mu) \alpha \cdot \C{\Opt}.\\
\end{align*}

Since $\C{\Opt} \le \C{\adv}$, we have $\C{\texttt{MIX}} \le [\mu + (1 - \mu) \alpha] \cdot \C{\adv } = (1 + \epsilon) \C{\adv}$. Thus, \texttt{MIX} is $(1+\epsilon)$-consistent. 

In addition, note that $\C{\adv} \le U + C$ and $\C{\Opt} \ge L + C/T$, then $\C{\adv} \le \frac{U+C}{L+C/T} \C{\Opt }$. Thus, $\C{\texttt{MIX}} \le [\mu \frac{U+C}{L+C/T} + (1 - \mu) \alpha] \cdot \C{\Opt } = ((1 - \frac{\epsilon}{\alpha - 1})\frac{U+C}{L+C/T} + \frac{\epsilon\alpha}{\alpha - 1}) \C{\Opt}$. Thus, the robustness of \texttt{MIX} is $(1 - \frac{\epsilon}{\alpha - 1})\frac{U+C}{L+C/T} + \frac{\epsilon\alpha}{\alpha - 1} = (1 - \frac{\epsilon}{\alpha - 1})\kappa (1 + \frac{C}{U}) + \frac{\epsilon\alpha}{\alpha - 1}$.
\end{proof}

Therefore, to achieve consistency below $\sqrt{\kappa}$, the \mix algorithm requires robustness with a linear dependence on $T$ (and $C$) which is not required for general learning-augmented algorithms.
In order to match the lower bound of \autoref{thm:advice-gen-lb}, we next propose a more sophisticated algorithm that carefully controls the impact of the peak charge when provided with incorrect decision advice.

\subsection{Augmenting \rob with Decision Advice}

In this section, we design an algorithm called \textbf{P}eak-\textbf{A}ware \textbf{W}ater Level \textbf{S}moothing with \textbf{D}ecision \textbf{A}dvice (\decision), which augments \rob with \textit{advice-tracking decisions} to match a specific consistency bound $1+\epsilon$. 
Depending on the value of $\epsilon$, \decision calculates a method of following the advice in one of two ways.

\paragraph{Near 1-consistency}
The algorithm will \textit{aggressively track} advice when $0 \le  \epsilon < 1$. \decision aims to match the accumulated volume purchased by the advice stream, with per-step purchases capped by a rate limiter $m$. In this setting, the algorithm does not consider the arriving prices, only the advice stream. Although the rate limiter may lead to added price costs compared to \mix, when advice is accurate, we show that the savings to the peak charge outweigh them. In addition, the rate limiter allows \decision to avoid the linear dependence on $T$ seen in \mix.  The full procedure is given in \autoref{alg:decision}.

\paragraph{General consistency} When $\epsilon\ge 1$, the algorithm engages in \textit{balanced tracking} of advice. In this setting, similar to \rob, \decision takes a target water level parameter $\Phi$ that influences purchases via the water-filling solution $d_t = \nicefrac{(\Phi-p_t)_+}{C}$, and a rate limiter parameter $m$.

\decision also takes a parameter $H>\max\{\nicefrac{C}{T}, \Phi-L\}$ that allows it to decide how to follow the decision advice.  At a high level, \decision improves upon \mix by explicitly accounting for the accrued cost of the decision advice so far: when the running cost of advice is high, \decision switches to ignore it to preserve robustness.  
In particular, $H$ bounds the cost attributable to the advice; it does so by influencing a set of shadow decisions $y_t$, where a larger $H$ corresponds to more willingness to follow a large $\hat{x}_t$ (at the risk of incurring an ill-advised large peak charge).  Reflecting this intuition, $m$ is set to $\nicefrac{H}{C} \ge \nicefrac{1}{T}$, meaning that there is a looser restriction on the purchasing rate as $H$ grows.

In each time step, \decision computes the dual decision $d_t$ and the shadow decision $y_t$, purchasing the larger of the two.  This decision is capped by $m$ and overridden if compulsory trading is needed to meet the constraint.  The full procedure is given in \autoref{alg:decision}. We show that to reach the target consistency, it suffices to set $\Phi = L+\frac{\epsilon(L+C/T)}{2}$ and $H = \frac{U}{\epsilon}$ for $\epsilon \le \sqrt{\kappa}$. If input $\epsilon> \sqrt{\kappa}$, we instead run the algorithm with $\epsilon= \sqrt{\kappa}$ as there is no robustness benefit to higher consistency beyond this point.

\begin{algorithm}[t]
\caption{\decision Algorithm}
\label{alg:decision}
\begin{algorithmic}[1]
\Require Consistency parameter $\epsilon\ge 0$
\Ensure online decisions $\{x_t\}_{t\in[T]}$
\State \textbf{initialize:}
\State \hspace*{\algorithmicindent}\parbox[t]{.6\linewidth}{\raggedright $r \gets 1$} \Comment{Remaining amount to purchase}

\State \hspace* {\algorithmicindent}\parbox[t]{.6\linewidth}{\raggedright \textbf{if} $\epsilon < 1 $ \textbf{then}}
\State \hspace*{\algorithmicindent+\algorithmicindent}\parbox[t]{.6\linewidth}{\raggedright $m \gets \min\{1,(U-L)/C\}$} 
\State \hspace*{\algorithmicindent+\algorithmicindent}\parbox[t]{.4\linewidth}{\raggedright $A \gets 0$} \Comment{Cumulative advice's purchase}
\State \hspace*{\algorithmicindent+\algorithmicindent}\parbox[t]{.4\linewidth}{\raggedright $X\gets 0$} \Comment{Cumulative \decision's purchase}
\State \hspace* {\algorithmicindent}\parbox[t]{.6\linewidth}{\raggedright \textbf{else}}
\State \hspace*{\algorithmicindent+\algorithmicindent}\parbox[t]{.6\linewidth}{\raggedright $\epsilon\gets \min\{\epsilon,\sqrt{\kappa}\}$} 
\State \hspace*{\algorithmicindent+\algorithmicindent}\parbox[t]{.6\linewidth}{\raggedright $\Phi \gets L+\frac{\epsilon(L+C/T)}{2}$} 
\State \hspace*{\algorithmicindent+\algorithmicindent}\parbox[t]{.6\linewidth}{\raggedright $H \gets \frac{U}{\epsilon}$} 
\State \hspace*{\algorithmicindent+\algorithmicindent}\parbox[t]{.6\linewidth}{\raggedright $m \gets \min\{1,H/C\}$} 
\State \hspace*{\algorithmicindent+\algorithmicindent}\parbox[t]{.4\linewidth}{\raggedright $(y_1,\ldots,y_T)\gets (0,\ldots,0)$} \Comment{Shadow decision variables}
\State \hspace* {\algorithmicindent}\parbox[t]{.6\linewidth}{\raggedright \textbf{end if}}

\\
\For{$t=1,\ldots,T$}
    \State Observe price $p_t$ and advice $\hat{x}_t$
    \State $\ell_t\gets \max\{0,r-(T-t)m\}$
        \Comment{Deadline-feasibility floor}
    \State $u_t\gets \min\{r,m\}$
        \Comment{Remaining-demand and rate cap}

    \If{$\epsilon <1$}
        \State $A\gets A+\hat{x}_t$
        \State $z_t\gets A-X$
            \Comment{Unpurchased advice target}
    \ElsIf{$\epsilon \ge 1$}
        \State $y_t \gets \max\limits_{y_t\in[0,\hat{x}_t]}:\sum\limits_{s\le t} p_s y_s+C\max\limits_{s\le t} y_s \le H$ \Comment{Advice-driven target}
        \State $d_t\gets (\Phi-p_t)_+/C$
            \Comment{Price-driven target}
        \State $z_t\gets \max\{y_t,d_t\}$
    \EndIf

    \State $x_t\gets \min\{u_t,\max\{\ell_t,z_t\}\}$
        \Comment{Project target onto the feasible interval}
    \State $r\gets r-x_t$
    \State $X\gets X+x_t$
\EndFor
\end{algorithmic}
\end{algorithm}

\begin{restatable}{theorem}{DecisionConsistency}\label{thm:decision-advice-consistency}
Given input parameter $\epsilon \ge 0$, \decision is $(1+\epsilon)$-consistent.

\end{restatable}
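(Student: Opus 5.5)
The plan is to treat the two branches of \decision separately, in both cases comparing $\C{\decision(\mathcal{I},\adv)}$ directly to $\C{\adv}=\sum_t p_t\hat x_t + C\hat v$, where $\hat v=\max_t\hat x_t$. Throughout I use the trivial bound $\C{\adv}\ge L+\nicefrac{C}{T}$ from the discussion after \autoref{lem:largeC}. Both branches also rely on the fact that the projection $x_t=\min\{u_t,\max\{\ell_t,z_t\}\}$ never lets a single purchase exceed $m$. Feasibility follows exactly as in \rob, since the floor $\ell_t$ guarantees $r=0$ after round $T$.

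\textbf{Branch $\epsilon\ge1$ (balanced tracking).} First, the clipping $\epsilon\gets\min\{\epsilon,\sqrt\kappa\}$ can only shrink $\epsilon$, so it suffices to prove $(1+\epsilon)$-consistency for the clipped value. I then split on whether $\C{\adv}\le H=\nicefrac{U}{\epsilon}$.

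\emph{Cheap advice, $\C{\adv}\le H$.} Every prefix of the advice also has cost at most $H$, so the shadow maximization returns $y_t=\hat x_t$ for all $t$. Moreover $C\hat v\le H$ gives $\hat v\le m$. Consequently, until the unit is complete, \decision buys $x_t\ge\hat x_t$ at every step. Hence it completes no later than the advice, so no compulsory trade at a price exceeding the advice's is needed. I will argue that $x_t\le \max\{\hat x_t,d_t\}\le \hat x_t+d_t$ before the unit is completed, and that the floor $\ell_t$ is inactive because cumulative purchases dominate the advice's. The purchase cost then splits into the advice's price cost plus purchases of the $d_t$ type, which cost at most $\Phi$ per unit bought. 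The peak is at most $\max\{\hat v,(\Phi-L)/C\}$. This yields
\[
\C{\decision}\le \C{\adv} + \Phi + (\Phi-L) - L \le \C{\adv}+\epsilon\bigl(L+\tfrac{C}{T}\bigr)\le(1+\epsilon)\C{\adv}.
\]
I expect a small bookkeeping step here: the volume bought via $d_t$ in excess of $\hat x_t$ only substitutes for advice volume, so it contributes at most $\Phi$ minus a price of at least $L$.

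\emph{Expensive advice, $\C{\adv}>H$.} Every price is at most $U$ and the peak is at most $m\le \nicefrac{H}{C}$. Therefore $\C{\decision}\le U+H=U\frac{1+\epsilon}{\epsilon}=(1+\epsilon)H<(1+\epsilon)\C{\adv}$.

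\textbf{Branch $\epsilon<1$ (aggressive tracking).} Here it suffices to show $\C{\decision}\le\C{\adv}$. The invariant is $X\le A$: \decision only buys the backlog $A-X$, capped at $m$, plus whatever the floor $\ell_t$ forces. Backlog arises only at rounds where $\hat x_t>m$, so either \decision coincides with the advice, or $\hat v>m$ and \decision's peak is at most $m$. In the latter case the peak saving is $C(\hat v-m)$. Each unit of delayed volume is later bought at a price at most $U$ instead of at least $L$, which costs at most $U-L\le Cm$ extra per unit. The main obstacle is charging this delayed-volume price penalty against the peak saving $C(\hat v-m)$. I would try to bound the total backlog volume $E=\sum_t(\hat x_t-m)_+$ by $k(\hat v-m)$, where $k$ counts the advice steps exceeding $m$, and combine this with $k\hat v\le1$ and $m=\min\{1,(U-L)/C\}$. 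If that telescoping fails to give a ratio of $1$, I would fall back to absorbing the residual into the slack $\epsilon(L+\nicefrac{C}{T})$, using $\C{\adv}\ge L+\nicefrac{C}{T}$.
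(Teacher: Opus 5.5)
Your structure matches the paper's proof. You split on $\epsilon<1$ versus $\epsilon\ge1$. In the balanced branch you split on $\C{\adv}\le H$, where $y_t=\hat x_t$, $\hat v\le m$, and the extra $d_t$-purchases add at most $2(\Phi-L)=\epsilon(L+C/T)$. The other case is $\C{\adv}>H$, handled by the trivial bound $U+H=(1+\epsilon)H$. This is exactly the proof of Lemma~\ref{lem:decisionConsistent}, and your handling of it is sound. Your check that the floor $\ell_t$ is slack when $\C{\adv}\le H$ is a detail the paper skips. In the aggressive branch you also use the paper's idea: charge the price penalty of moved volume, at most $U-L=Cm$ per unit, against the peak saving $C(\hat v-m)$. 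However, two steps you rely on there are false as stated.

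First, $k\hat v\le 1$ does not hold. With $m=0.15$ and advice $(0.5,0.3,0.2,0,\dots,0)$ you get $k=3$ and $k\hat v=1.5$. What does hold is $km<\sum_{t:\hat x_t>m}\hat x_t\le 1$ for advice buying one unit, and that suffices. Since $\hat v>m$ forces $m=(U-L)/C<1$, you get $(U-L)E=CmE\le C\,mk(\hat v-m)\le C(\hat v-m)$. The paper instead uses $\hat x_t-m\le \hat x_t(1-m/\hat v)$ to get $E\le(\hat v-m)/\hat v$. Your fallback of absorbing a residual into $\epsilon(L+C/T)$ is not available, because $\epsilon=0$ is allowed in this branch, so exact $1$-consistency is required.

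Second, $X\le A$ is not an invariant. If the advice buys the whole unit at round $T$ and $m<1$, the floor $\ell_t$ forces \decision to buy before round $T$ while $A=0$. So volume is also moved \emph{earlier}, not only delayed. The fact you actually need is $x_t\ge\min\{\hat x_t,m\}$ for all $t$, which the paper asserts without proof. It holds for the following reasons:
\begin{itemize}
\item Before the floor first binds, $X\le A$, so $x_t=\min\{r,m,z_t\}$ with $z_t\ge\hat x_t$ and $r\ge\hat x_t$.
\item At the first binding round $s$, $x_s=\ell_s>z_s\ge\hat x_s$, and the remaining demand becomes $(T-s)m$.
\item After round $s$, \decision buys exactly $m$ in every remaining round.
\end{itemize}
Hence $\sum_t(\hat x_t-x_t)_+\le E$. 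Since both schedules buy one unit, $\sum_t p_t(x_t-\hat x_t)\le (U-L)\sum_t(\hat x_t-x_t)_+$, which covers delayed and advanced volume alike. With these two repairs your argument goes through.
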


\begin{proof}
    We give separate arguments for  $\epsilon <1$ (aggressive tracking) and $\epsilon \ge 1$ (balanced tracking).

    \paragraph{Aggressive tracking}
    We split the analysis into cases based on the size of the largest advice decision $\max_{i\in[T]} \hat{x}_i$. If $\max_{i\in [T]} \hat{x}_i \le m$, then the advice respects the rate limiting parameter. Note that the algorithm is able to follow advice exactly, as $z_t$ will always equal $\hat{x}_t\le m$.  Because it follows the advice exactly, \decision is $1$-consistent in this case.
    
    Now, if $\max_{i\in [T]} \hat{x}_i > m$, we note that $x_t\ge \min\{\hat{x}_t,m\}$. We define the total amount of ``deferred advice'' as
    $$\sum_t(\hat{x}_t-x_t)_+ \le\sum_t(\hat{x}_t-m)_+ \le \frac{\max_{i\in [T]} \hat{x}_i - m}{\max_{i\in [T]} \hat{x}_i}.$$

    Let \adv represent following the decision advice exactly. Deferred advice increases the price cost of \decision compared to \texttt{ADV}---in the worst-case, this price increase is at most $\frac{\max_{i\in [T]} \hat{x}_i - m}{\max_{i\in [T]} \hat{x}_i}(U-L)$.  Simultaneously however, \decision's peak charge is reduced by at least $C(\max_{i\in [T]} \hat{x}_i - m)$.  We have:

    \begin{align*}
        \C{\decision}-\C{\adv}&\le\frac{\max_{i\in [T]} \hat{x}_i - m}{\max_{i\in [T]} \hat{x}_i}(U-L) - C\left(\max_{i\in [T]} \hat{x}_i - m\right)\\
        &\le \left(\max_{i\in [T]} \hat{x}_i - m\right)\left(\frac{U-L}{\max_{i\in [T]} \hat{x}_i}-C \right) \le 0
    \end{align*}

    The final inequality holds because $\max_{i\in [T]} \hat{x}_i > m$ implies $m=\frac{U-L}{C}$ and similarly $C > \frac{U-L}{\max_{i\in [T]} \hat{x}_i}$. Since $\C{\decision} \le \C{\adv}$, \decision is 1-consistent by definition. Thus, regardless of the size of $\max_{i\in [T]} \hat{x}_i$, we show 1-consistency, which also implies $(1+\epsilon)$-consistency. %

    \paragraph{Balanced tracking}
    
    We first consider the consistency as a function of $\Phi$ and $H$ (\sref{Lemma}{lem:decisionConsistent}), deferring the full proof to \autoref{app:LAA}. This result then suggests assignments for $\Phi$ and $H$ that satisfy the desired consistency bound.

    \begin{restatable}{lemma}{decisionConsistent} \label{lem:decisionConsistent}
        The \decision algorithm is $(1+\epsilon)$-consistent for $\epsilon = \max\{\frac{2(\Phi-L)}{L+C/T}, \frac{U}{H}\}$
    \end{restatable}

    Therefore, if we set $\Phi = L+\frac{\epsilon(L+C/T)}{2}$ and $H = \frac{U}{\epsilon}$ we find

    \begin{align*}
        \max\left\{ \frac{2(\Phi-L)}{L+C/T}\frac{U}{H} \right\} 
        &=  \max\left\{\frac{2(L+\frac{\epsilon(L+C/T)}{2}-L)}{L+C/T}, \frac{U}{U/\epsilon}\right\}\\
        &= \max\{\epsilon,\epsilon\} = \epsilon.
    \end{align*}
    
    \noindent Thus, we have shown \decision is $(1+\epsilon)$-consistent in both regimes.
\end{proof}

\noindent Next, we show that for a given (desired) consistency $(1+\epsilon)$, \decision is $\gamma(\epsilon)$-robust, where $\gamma(\epsilon)$ matches the optimal consistency-robustness bound from \autoref{thm:advice-gen-lb}.

\begin{theorem}\label{thm:decision-advice-ub}
    Given input parameter $\epsilon \ge 0$,
    the \decision algorithm is $\gamma(\epsilon)$-robust, where
    
    $$\gamma(\epsilon) \le \begin{cases}
        2\kappa & \epsilon< 1 \\
        4\max\{\sqrt{\kappa}, \nicefrac{\kappa}{\epsilon   }\} & \epsilon \ge 1
    \end{cases}$$
\end{theorem}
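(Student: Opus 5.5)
The plan is to bound $\C{\decision}$ by a sum of simple terms, each of which can be charged against one of the two lower bounds on \Opt used in the proof of \autoref{thm:rob-ub}: the universal bound $\C{\Opt(\mathcal{I})}\ge L+\frac{C}{T}$, and the water-level bound \eqref{ineq:threshold-opt-lb}, which gives $\C{\Opt(\mathcal{I})}\ge\Phi$ whenever $F(\Phi)=\sum_t(\Phi-p_t)_+\le C$. The two regimes of the algorithm are handled separately.

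\textbf{Aggressive tracking ($\epsilon<1$).} Here every decision satisfies $x_t\le u_t\le m=\min\{1,\frac{U-L}{C}\}$, and the algorithm buys exactly one unit. Its price cost is therefore at most $U$ and its peak charge at most $Cm\le U-L\le U$, so $\C{\decision}\le 2U$. Dividing by $\C{\Opt}\ge L+\frac{C}{T}$ gives $2\kappa$ immediately, with no dependence on the advice.

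\textbf{Balanced tracking ($\epsilon\ge1$, with $\epsilon\le\sqrt{\kappa}$ after clipping).} The peak charge is at most $Cm\le H=\frac{U}{\epsilon}$.

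For the price cost, I would use $x_t\le\max\{\ell_t,y_t,d_t\}$ to split each purchase into three pieces:
\begin{itemize}
\item an advice-driven part, bounded by $y_t$, whose total cost is at most $\sum_t p_t y_t\le H$ by the budget constraint defining the shadow decisions;
\item a water-filling part, bounded by $d_t$, bought only when $p_t\le\Phi$, so it costs at most $\Phi$ in total;
\item a compulsory excess $(x_t-\max\{y_t,d_t\})_+$, which is nonzero only when the floor $\ell_t$ binds, totals at most one unit, and costs at most $U$.
\end{itemize}

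\emph{No compulsory trade.} Then $\C{\decision}\le\Phi+2H$. Dividing by $L+\frac{C}{T}$ gives at most $1+\frac{\epsilon}{2}+\frac{2\kappa}{\epsilon}$. Since $1\le\epsilon\le\sqrt{\kappa}$, this is at most $4\max\{\sqrt{\kappa},\frac{\kappa}{\epsilon}\}$.

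\emph{Compulsory trade occurs.} I would copy the argument of \autoref{thm:rob-ub}. Let $t$ be the first round in which $\ell_t$ exceeds the target $z_t\ge d_t$.
\begin{itemize}
\item Every earlier round bought at least $\min\{u_s,d_s\}$. This uses $d_s\le\frac{\Phi-L}{C}\le m$, which holds because $\Phi-L=\frac{\epsilon(L+C/T)}{2}\le\frac{U}{\sqrt\kappa}\le\frac{U}{\epsilon}=H$ for $\epsilon\le\sqrt\kappa$. So either demand was already met, or $\sum_{s<t}d_s\le 1-r$.
\item Every later $d_s$ is at most $m$.
\item Hence $\frac{F(\Phi)}{C}\le(1-r)+d_t+(T-t)m<1$, because $r-(T-t)m>d_t$.
\end{itemize}
By \eqref{ineq:threshold-opt-lb}, $\C{\Opt}\ge\max\{\Phi,\,L+\frac{C}{T}\}$. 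Then
\begin{itemize}
\item $\frac{U}{\Phi}\le\frac{2\kappa}{\epsilon}$, using $\Phi\ge\frac{\epsilon(L+C/T)}{2}$;
\item $\frac{2H}{L+C/T}=\frac{2\kappa}{\epsilon}$;
\item $\frac{\Phi}{\Phi}=1$.
\end{itemize}
Summing gives a ratio of at most $\frac{4\kappa}{\epsilon}+1$. One of the slack terms needs tightening, e.g.\ by noting that the compulsory and water-filling purchases together are at most one unit, which gives price cost at most $U+H$ rather than $U+\Phi+H$. This brings the bound to $4\max\{\sqrt{\kappa},\frac{\kappa}{\epsilon}\}$.

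\textbf{Main obstacle.} I expect the hardest step to be the compulsory-trade certification when the target is $z_t=\max\{y_t,d_t\}$ rather than $d_t$ alone. The concern is that shadow purchases above $d_t$ consume budget, and the rate limiter $m=\frac{H}{C}$ differs from \rob's. I would verify that extra purchases only decrease $r$, and hence only help the inequality $(1-r)\ge\sum_{s<t}d_s$. I would also confirm $d_s\le m$ carefully, including the case $m=1$.
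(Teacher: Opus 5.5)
Your proposal is correct and follows essentially the same route as the paper: the same compulsory/no-compulsory case split, the same water-level certificate $\C{\Opt}\ge\Phi$ from $F(\Phi)<C$, and the same shadow budget $H$. The differences are only bookkeeping. You charge $U$ to $\Phi$ and $2H$ to $L+C/T$, whereas the paper notes that the price cost is trivially at most $U$ (exactly one unit is bought at prices at most $U$), so $\C{\decision}\le U+H$ can be charged entirely to $\Phi$; your ``tightening'' step is immediate for the same reason. The $m=1$ case you flagged is harmless: there $u_s=r$, so each earlier round either finishes the demand or covers $d_s$, and a compulsory trade can only occur at $t=T$, so no bound on future $d_s$ is needed.
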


\begin{proof}
    We split the analysis by $\epsilon$.

    \paragraph{Aggressive tracking}  

    In this case we can note that \decision has cost at most $U+Cm$. Since $m=\min\{1,(U-L)/C\}$, this results in cost at most $U+\min\{C,U-L\}$. Meanwhile, \opt has a trivial lower bound of $L+C/T$.

    $$\frac{\C{\decision}}{\C{\opt}} \le \frac{U+\min\{C,U-L\}}{L+C/T}
    \le \frac{2U}{L+C/T}\le 2\kappa$$

    \paragraph{Balanced tracking}
     If $\epsilon > \sqrt{\kappa}$, we clip $\epsilon$ to $\sqrt{\kappa}$ as there is no consistency or robustness benefit to a larger value of $\epsilon$.
     We first consider robustness as a function of $\Phi$ and $H$. We then show that the specified assignment of $\Phi$ and $H$ via $\epsilon$ gives the stated bound.

    \begin{restatable}{lemma}{decisionRobust}
        The \decision algorithm is $\gamma$-robust for $\gamma = \max\{\frac{H+2\Phi-L}{L+C/T}, \frac{U+H}{\Phi}\}.$
    \end{restatable}

    Using the above result, if we set $\Phi = L+\frac{\epsilon(L+C/T)}{2}$ and $H = \frac{U}{\epsilon}$, we find robustness: 

    \begin{align*}
        \max\left\{\frac{H+2(\Phi)-L}{L+C/T}, \frac{U+H}{\Phi}\right\} &= \max \left\{\frac{\frac{U}{\epsilon}+2(L+\frac{\epsilon(L+C/T)}{2})-L}{L+C/T}, \frac{U+\frac{U}{\epsilon}}{L+\frac{(\epsilon)(L+C/T)}{2}}\right\}\\
        &= \max\left\{\frac{\kappa}{\epsilon}+\frac{L}{L+C/T}+\epsilon, \frac{\kappa(1+\frac{1}{\epsilon})}{\frac{L}{L+C/T}+\frac{\epsilon}{2}}\right\}\\
        \text{Since } \epsilon\le \sqrt{\kappa} \text{ and } \frac{\kappa}{\epsilon}&\ge\epsilon>\frac{L}{L+C/T}, \text{ the left term } \frac{\kappa}{\epsilon}+\frac{L}{L+C/T}+\epsilon\le 3\frac{\kappa}{\epsilon} \\
        &\le \max \left\{3\frac{\kappa}{\epsilon}, \frac{\kappa(1+\frac{1}{\epsilon})}{\frac{L}{L+C/T}+\frac{\epsilon}{2}}\right\}\\
        \text{Since } 1+\frac{1}{\epsilon}\le 2 \text{ and } &\frac{L}{L+C/T}\ge 0, \text{ the right term } \frac{\kappa(1+\frac{1}{\epsilon})}{\frac{L}{L+C/T}+\frac{\epsilon}{2}}\le \frac{2\kappa}{\epsilon/2}\\
        &\le \max \left\{ 3\frac{\kappa}{\epsilon}, 4\frac{\kappa}{\epsilon}\right\} \le 4\frac{\kappa}{\epsilon}
    \end{align*}

Using the clipped $\epsilon\le\sqrt{\kappa}$, we can conclude that that \decision achieves robustness $\gamma(\epsilon) \le 4\max\{\sqrt{\kappa},\nicefrac{\kappa}{\epsilon}\}$.

\end{proof}

\subsection{Converting other forms of predictions into decision advice}

Our framework is not restricted to predictions as purchase decisions. Other forms of predictions can also be incorporated by converting them into decision advice. The resulting advice can then be used by \decision without modifying the algorithm. Moreover, whenever the converted advice can attain the offline optimal cost when the predictions are accurate, the analysis of \decision also applies and the same consistency-robustness guarantee holds.

As an example, the structure of the optimal primal and dual solutions in Lemma~\ref{lem:opt-structure} suggests a form of \textit{instance advice}: $(\hat{\lambda}, \hat{k})$, where $\hat{\lambda}$ predicts the optimal dual variable $\lambda^*$, which also equals the optimal cost, and $\hat{k} \in[T]$ predicts the number of prices $p_t$ no greater than $\lambda^*$. These predictions specify both which prices to accept and how much purchase at each accepted price. 
We can convert $(\hat{\lambda}, \hat{k})$ into decision advice $\adv(\hat{\lambda}, \hat{k}): = \{\tilde{x}_t\}_{t\in[T]}$, where $\tilde{x}_t=\nicefrac{1}{\hat{k}} \mathbbm{1} \{p_t \le \hat{\lambda}\}$. To ensure feasibility under imperfect predictions, each purchase is capped at the remaining demand, and any unmet demand is purchased in the final time slot. 

\begin{restatable}{proposition}{thmInstance}
    Following advice $\adv(\hat{\lambda}, \hat{k})$ can attain the optimal offline cost when predictions $\hat{\lambda}$ and $\hat{k}$ are accurate. Additionally, neither $\hat{\lambda}$ nor $\hat{k}$ predictions alone can attain optimal cost.
\end{restatable}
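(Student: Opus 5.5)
The statement has two parts. The first is that accurate predictions yield an optimal schedule. The second is that neither component alone suffices. I would prove them in that order.

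\textbf{Accurate predictions attain the optimum.} Suppose $\hat{\lambda}=\lambda^*$ and $\hat{k}=k=|\{t:p_t\le\lambda^*\}|$. Then the converted advice is $\tilde{x}_t=\frac{1}{k}\mathbbm{1}\{p_t\le\lambda^*\}$, exactly the primal solution \eqref{eq:offline-opt-solution} of \sref{Lemma}{lem:opt-structure}. I would check that the capping and final-slot repair never bind. Exactly $k$ slots have $p_t\le\hat{\lambda}$, each receiving $1/k$, so the remaining demand is never exceeded before the $k$-th accepted slot. After that slot the cumulative purchase is exactly one, so no unmet demand remains at time $T$. Hence the executed advice coincides with $\mathbf{x}^*$, and its cost is $\C{\Opt}$ by \sref{Lemma}{lem:opt-structure}.

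One subtlety is ties, where several prices equal $\lambda^*$. By \eqref{eq:opt-lambda}, slots with $p_t=\lambda^*$ contribute zero to the water-filling sum. Including them in $k$ is therefore consistent with the lemma's definition of $k$. I would simply rely on the lemma's stated optimality of $\mathbf{x}^*$ with this $k$, rather than re-derive it.

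\textbf{Each prediction alone fails.} I would interpret this as saying there is no conversion rule that, given only $\hat{\lambda}$ (or only $\hat{k}$), is optimal on every instance where that single prediction is accurate. I would prove each case with a pair of instances that share the prediction but have different optimal schedules; the two pairs should involve different values of $T$ or $C$ as needed.

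For $\hat{\lambda}$ alone, the issue is the online prefix: without $k$, the advice at time $t$ must choose $\tilde{x}_t$ before knowing how many later slots will fall at or below $\hat{\lambda}$. I would take two instances with the same first price and the same $\lambda^*$ but different values of $k$. For example, with $C$ fixed, take $p_1$ low followed by either one more equally low price, or two prices arranged so that \eqref{eq:opt-lambda} gives the same $\lambda^*$. The optimal first purchase is $1/k$, which differs between the two instances, so any rule based only on $\hat{\lambda}$ is suboptimal on one of them. I would also argue that the optimum is unique in these instances, since otherwise a different optimal schedule might exist. Uniqueness follows because any schedule that is not uniform over the accepted set has a strictly larger peak for the same price cost, or a strictly larger price cost.

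For $\hat{k}$ alone, the rule knows to buy $1/\hat{k}$ per slot but not which slots qualify. I would take two instances with the same prefix and the same $k=1$. In the first, $p_1$ is cheap and the later prices are high, so the optimum buys everything at $t=1$. In the second, $p_1$ is the same but a later price is much cheaper, so the optimum buys nothing at $t=1$. Since the advice must decide $\tilde{x}_1$ identically in both, it fails on one of them. Achieving $k=1$ requires large $C$ relative to price gaps; I would choose parameters satisfying \eqref{eq:opt-lambda} with one slot below $\lambda^*$.

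\textbf{Main obstacle.} The hard step is constructing explicit instance pairs satisfying \eqref{eq:opt-lambda} with the prescribed shared $\lambda^*$ or $k$. It also requires certifying uniqueness of the optimum so that every optimal schedule is pinned down. I would first solve for $\lambda^*$ in small instances with $T=2$ or $T=3$ by hand to make these computations concrete.
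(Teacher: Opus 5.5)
Your first part matches the paper: with accurate $(\hat\lambda,\hat k)$ the converted advice is the primal solution of \sref{Lemma}{lem:opt-structure}, and the cap and final-slot repair never bind. For the second part you take a qualitative route. It uses two instances with a common prefix whose optima prescribe different first purchases, plus uniqueness of the optimum. That route is sound for the statement as written, but two steps of your plan fail as stated.

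\textbf{The $k=1$ regime is reversed.} If $k=1$, then \eqref{eq:opt-lambda} gives $\lambda^*=p_{\min}+C$, and every other price must exceed $p_{\min}+C$. So $C$ must be \emph{smaller} than the gap between the two cheapest prices. Raising $C$ raises $\lambda^*$ and can only increase $k$, up to $k=T$ in the regime of \sref{Lemma}{lem:largeC}. Your pair needs the later prices of the first instance to lie above $p_1+C$, and the cheap later price of the second to lie below $p_1-C$; hence $U-L>2C$. For example, take $L=0$, $U=4$, $C=1$, $T=3$ with instances $(2,4,4)$ and $(2,0,4)$. These have $\lambda^*=3$ and $\lambda^*=1$, both with $k=1$. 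Every optimum has $x_1=1$ in the first instance and $x_1=0$ in the second.

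\textbf{The $\hat\lambda$ pair needs a common horizon.} Your example (``one more equally low price, or two prices'') reads as horizons $2$ and $3$. Since $T$ and $C$ are known in advance (\sref{Assumption}{asm:knownT}), the two instances of a pair must share $T,C,L,U$; otherwise the algorithm distinguishes them at $t=1$. Pad the shorter instance with a price above $\lambda^*$. With $C=2$, $T=3$, $U>1$, the instances $(0,0,U)$ and $(0,\tfrac12,\tfrac12)$ both have $\lambda^*=1$, with $k=2$ and $k=3$; this is the paper's pair at $T=3$.

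\textbf{Uniqueness.} Replace your heuristic with the certificate behind \eqref{ineq:threshold-opt-lb} at $\Phi=\lambda^*$: $\sum_t p_tx_t+Cv\ge\sum_t(p_t+\beta^*_t)x_t\ge\lambda^*\sum_t x_t\ge\lambda^*$. Equality holds only if $\sum_t x_t=1$, $x_t=v$ whenever $p_t<\lambda^*$, and $x_t=0$ whenever $p_t>\lambda^*$. So choose instances with no price equal to $\lambda^*$, since such ties leave a continuum of optima.

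\textbf{Comparison with the paper.} The paper lower-bounds the algorithm's cost on both instances as a function of the shared $x_1$ and minimizes the larger ratio. This gives consistency at least $\tfrac43-o(1)$ for $\hat\lambda$ alone, using $(0,0,2,\dots,2)$ versus $(0,1-\tfrac{1}{T-1},\dots)$ with $C=2$. It gives at least $1+\tfrac{1}{4\hat k-3}$ for every $2\le\hat k\le T-1$. Your uniqueness argument is shorter and proves exactly the qualitative claim. However, it gives no explicit gap, and for the count prediction it covers only $k=1$. That suffices, because the claim can only be existential: an accurate $\hat k=T$ alone does attain the optimum, by buying $\tfrac1T$ every round.
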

    We present a sketch here, and a full proof is available in \autoref{app:LAA}.
    Recall the offline primal-dual solution in Lemma~\ref{lem:opt-structure}. Perfect knowledge of the optimal cost $\lambda^*$ imposes a ceiling on accepted prices, but does not dictate how much to buy when $p_t \le \lambda^*$ appears. Conversely, knowledge of the optimal purchase count $k$ indicates target purchase amounts, but does not suggest which prices to accept. In both cases, this uncertainty can be exploited to create instances with cost strictly larger than the offline  optimum. 
    Put together, these predictions are complementary, describing both which prices are good, and how much to purchase.

%% file: sec-experiments.tex
In this section, we evaluate the performance of the \rob and \decision algorithms against several natural baselines.  We simulate a case study of electricity procurement with a demand charge, using real-time and day-ahead prices from four U.S. electric grids.

\subsection{Experimental Setup}
We simulate an electricity billing period where a consumer (e.g. a flexible data center) must procure a total of 1 (normalized) unit of energy in the real-time market. 
We solve the offline problem (with full knowledge of prices and demand charge) using SciPy \cite{scipy2020} to report the empirical competitive ratio of each tested algorithm.

\paragraph{Data} We use historical US grid price data obtained from GridStatus~\cite{GridStatus}. We consider real-time and day-ahead locational marginal prices (15 minute increments) for four US grid regions (ISOs): CAISO (California), ISONE (New England), ERCOT (Texas), and PJM (Mid-Atlantic). For ERCOT, CAISO, and PJM, our data covers the 2020-2024 calendar years, while ISONE covers July 2022-EOY 2024, giving $\sim$600,000 total observations. Real-time prices define the price sequence $p_t$; any negative prices are clipped to 0. 

Day-ahead prices are used to generate decision advice for learning-augmented algorithms by solving for the offline optimal on those sequences. While this is not a true machine-learned predictor, it serves as a reasonable proxy for our evaluation, as day-ahead price forecasts are readily available~\cite{Qu:24}. 

In \autoref{fig:sweeps}(d) specifically, we manipulate the decision advice directly to stress-test the robustness of our learning-augmented algorithms.  In this case, we define the decision advice as a linear combination of the true optimal solution on the real-time prices (\textit{perfect advice}) and an adversarial solution that maximizes total cost (\textit{worst-case advice}); by varying a parameter, a linear combination of the two solutions captures every scenario for decision advice.  This allows us to illustrate the effect of advice quality on performance.

\paragraph{Parameters} 
For our main experiments, we randomly sample 1000 instances for each ISO and average the results. Instances are drawn as blocks of $T=96$ consecutive time steps with no missing data; we set $L=0$, $U$ as the 99th percentile price in that ISO across the entire data set, and {$C = \nicefrac{U}{2}$}. Although we set $U$ to the 99th percentile price, this only affects algorithmic thresholds and choice of $C$; real prices are allowed to exceed this value. 
In select experiments we evaluate the effect of parameters ($C, T$, etc.) on the performance of tested algorithms.  For this sensitivity analysis, we randomly sample 100 instances from each ISO for each setting. In each variant, we test 10-15 parameter values.

\paragraph{Baselines}
In addition to evaluating the performance of \rob and \decision, we consider the following four baseline strategies.  \texttt{BlindAdvice} simply follows the decision advice, making the predicted purchase decision $\hat{x}_t$ at each time step $t$. \mix selects a convex combination of advice and \rob allocations in each step, we choose a mixing parameter $\mu = 0.5$. We also implement \owt, which is the optimal threshold-based strategy for one-way trading without a peak charge; this algorithm neglects the peak charge when making purchasing decisions and demonstrates the gain of \rob in the standard competitive setting (i.e., without predictions). In contrast to \owt, which minimizes prices, we include \texttt{Uniform} purchasing to demonstrate the effect of minimizing the peak charge. To measure the performance of \decision at varying target consistencies, we introduce a trust parameter $\xi\in[0,1]$, and set $\epsilon = \kappa^{(1-\xi)/2}$. Thus, larger $\xi$ corresponds to increased trust in advice and smaller consistency bound; note that some advice is used even when $\xi=0$.

\begin{figure}[t]
    \centering
    \includegraphics[width=\linewidth]{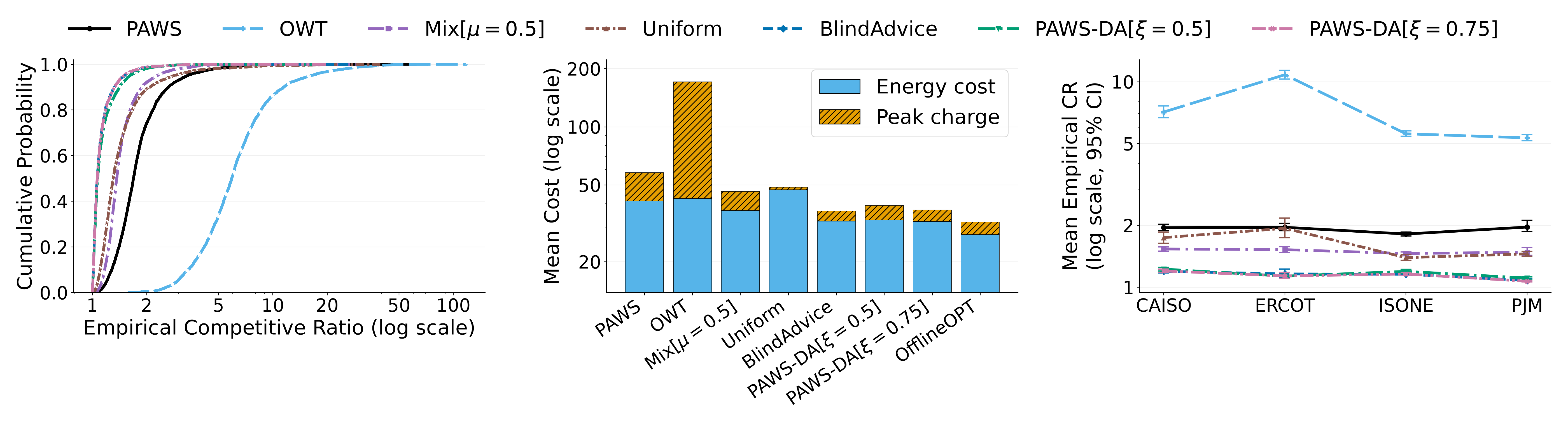}
    \makebox[0.99\textwidth][s]{ \small
        \hfill
        (a) CDF of Competitive Ratios
        \hspace{10mm}
        (b) Cost Decomposition
        \hspace{18mm}
        (c) Variation by ISO
        \hspace{6mm}
    }
    \caption{Experiments with fixed parameters. (a) Cumulative distribution function for  empirical competitive ratio for each algorithm across all ISOs. (b) Cost decomposition between energy and peak charge for each algorithm, averaged across ISO. (c) Average empirical competitive ratio and 95\% CI  by ISO. }
    \label{fig:main-results}
\end{figure}

\subsection{Experimental Results}

\paragraph{Main Results} \autoref{fig:main-results} shows the results of our main experiment setting.  In these experiments, using the day-ahead price forecasts to obtain decision advice resulted in near-optimal advice: as a result, in each plot we observe performance improving as \decision uses more of the advice. \decision[$\xi=0.75$] performs nearly identically to blind advice following, which is to be expected when advice is good.
\decision[$\xi=0.5$] exhibits performance quite close to \texttt{BlindAdvice} while retaining a strong robustness guarantee that is not present for arbitrary decision advice.
We observe that both \texttt{Uniform} and \mix[$\mu=0.5$] exhibit a better average empirical competitive ratio than \rob, but worse than the \decision variants. While both \mix[$\mu=0.5$] and \decision[$\xi=0.5$] are equally balanced on their respective consistency-robustness parameterization, the empirical competitive ratio of \decision[$\xi=0.5$] is $\approx 22\%$ lower on average. \autoref{fig:main-results}(b) shows that more of the variation between algorithms occurs due to the peak charge rather than their energy cost, suggesting that peak management is empirically more difficult. \autoref{fig:main-results}(c) demonstrates the impact of specific grids: while there is not much variation, the relatively worse performance of \owt and \texttt{Uniform} in ERCOT is unsurprising due to that grid having more frequent price spikes.

\paragraph{Sensitivity Analysis} \autoref{fig:sweeps} shows the results of our sensitivity analysis, where we vary certain parameters of the \paowt setting to observe their impact on performance. In the first three plots, high quality decision advice (based on day-ahead price forecasts) allow advice-informed algorithms to generally outperform \rob and \texttt{Uniform}. %

In \autoref{fig:sweeps}(a), all algorithms show improved performance as the relative size of $C$ increases. This result aligns with the competitive bounds of \autoref{thm:rob-ub}, which are monotonically decreasing in $C$. While the advice-informed algorithms do slightly worse than \rob when $C$ is small, they perform better than \rob at large $C$. \autoref{fig:sweeps}(b) shows the competitive ratio of all algorithms increasing in $T$. \texttt{Uniform} begins increasing immediately, while the remaining algorithms are approximately flat until $T\approx 96$. While \rob has the worst empirical competitive ratio when $T$ is small, it is less affected by increased $T$ and is nearly as strong as the advice-informed algorithms for large $T$. In contrast, \texttt{Uniform} is near-optimal for small $T$ but degrades much worse than the other baselines as $T$ increases.

In \autoref{fig:sweeps}(c) we show how the performance of \decision varies with different choices of $\xi$; as previously stated, we observe that the day-ahead forecast-based advice is near-optimal. Even with the same robustness bound the \decision[$\xi \to 0$] algorithm results in $\approx 10\%$ lower competitive ratio on real instances than the conservative \rob algorithm, though they remain within the same $95\%$ confidence interval. 

Finally, \autoref{fig:sweeps}(d) compares algorithm performance across varying advice corruption, using a linear combination of optimal and worst-case solutions across the actual prices in each instance.  We set a corruption factor between 0 and 1, where 0 corresponds to \textit{perfect decision advice} and 1 corresponds to \textit{completely adversarial advice}.
As expected, \texttt{BlindAdvice} and \mix[$\mu=0.5$] degrade linearly with advice corruption, while all \decision variants degrade sublinearly.
\rob and \texttt{Uniform} are both unaffected by advice quality.

\begin{figure}
    \centering
    \includegraphics[width=\linewidth]{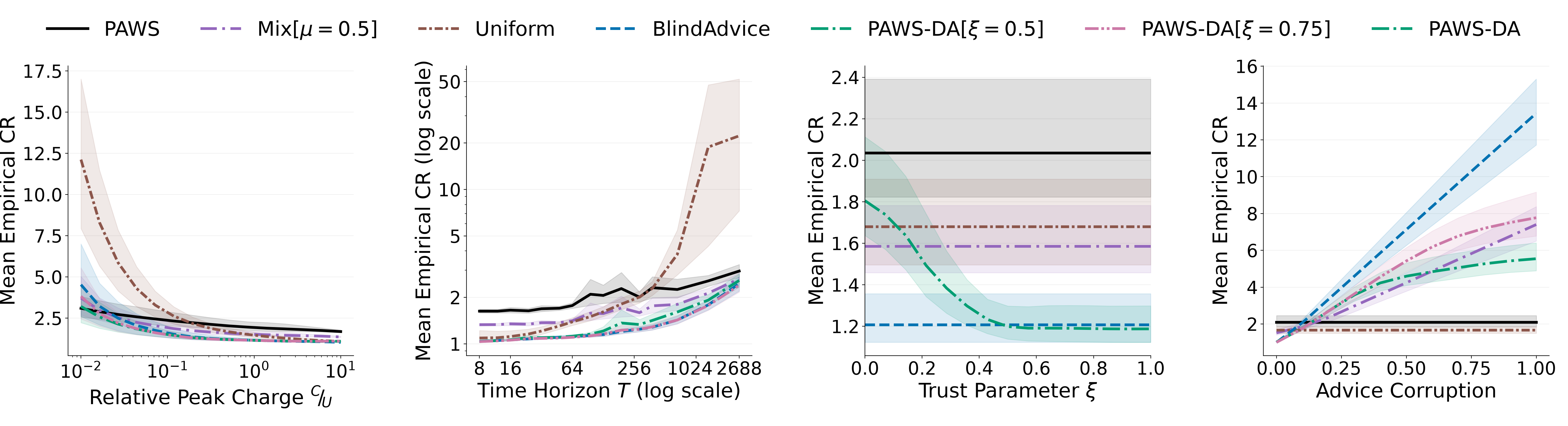}
    \makebox[0.99\textwidth][s]{\small
        \hfill
        (a) Changing C
        \hspace{14mm}
        (b) Changing T
        \hspace{15mm}
        (c) Changing $\xi$
        \hspace{11mm}
        (d) Changing Advice
        \hspace{1mm}
    }
    \caption{Experiments with varying parameters. All figures average across ISOs, showing mean empirical competitive ratio and 95\% CI. (a) Varying $C$ relative to $U$: decreasing CR as $C$ grows matches theoretical bounds. (b) Increasing time horizon $T$ makes the price-peak trade-off more challenging (c) Increasing trust parameter $\xi$ in \decision algorithm converges to blind advice. (d) Comparison of algorithms by share of adversarial advice corruption.}
    \label{fig:sweeps}
\end{figure}

%% file: sec-conclusion.tex
Motivated by electricity procurement with demand charges, we introduce and study \paowt, which extends one-way trading by additionally charging the player for the largest single purchase. This peak charge creates a tension between buying at low prices and spreading purchases across time. We characterize the limits of online purchasing and show that advance knowledge of the time horizon significantly changes what is possible in the setting. For the known-horizon setting, we design \rob, a dual-threshold algorithm that achieves an order-optimal competitive ratio. We also establish a lower bound on the consistency-robustness trade-off and design \decision, a learning-augmented algorithm that uses decision advice while providing order-optimal guarantees. We evaluate our proposed algorithms in an electricity procurement case study across four U.S. grid regions.

Several directions remain for future work. Our model considers a single purchasing obligation over a fixed horizon. A natural extension would allow multiple obligations with different deadlines that contribute to a shared peak charge, capturing the need to coordinate flexible electricity demand across overlapping workloads.  Such an extension might also consider how energy storage could help ``shave the peak.''  On the learning-augmented side, our algorithms focus on decision advice and use a parameter to control the consistency-robustness trade-off; it would be interesting to investigate other advice models in this setting, including uncertainty quantified predictions~\cite{Johnson:26UQ, Sun:24:UQ} and/or instance advice that might be easier to obtain for certain applications of \paowt.

%% file: acknowledgments.tex
This research is supported by National Science Foundation grants CAREER-2045641, CNS-2325956, CNS-2512128, and CNS-2533814, and the U.S. Department of Energy, Office of Science, Office of Advanced Scientific Computing Research, Department of Energy Computational Science Graduate Fellowship under Award Number DE-SC0024386. In addition, this work is funded by Natural Sciences and Engineering Research Council of Canada Discovery Grant RGPIN-2026-06424.

%% file: sec-ai-use.tex
The authors used generative AI tools in the creation of this paper and detail their role here. ChatGPT models produced \rob and the matching lower bound of \autoref{thm:known-T-lb}. These replaced an algorithm and lower bound which the authors had designed, which achieved the same competitive results but required a more involved analysis. ChatGPT also produced \decision and its matching consistency-robustness lower bound \autoref{thm:advice-gen-lb}, for which there was no earlier author-designed version. ChatGPT produced the lower bounds on instance advice found in  \autoref{lem:instanceOpCost} and \autoref{lem:instanceOpCount}. The experimental code was generated by Codex, and AI tools were used to assist the literature search. The authors formulated the problem and model, designed the experimental study, independently checked all proofs, and substantially rewrote the generated material. The authors take full responsibility for the presentation and correctness of all content.

%% file: appendix-preliminaries.tex
In this section, we give the full proofs of results deferred from \autoref{sec:prob}.

\unknownTLB*

\begin{proof}
    Fix a deterministic online algorithm, and consider a family of instances indexed by $k \in \N$ that begin by serving the price $p_t = 1/k$ for $t = 1, \ldots, k$. Let $x_t^{(k)}$, $t \in [k]$ denote the amount the algorithm purchases on instance $k$ in each of the first $k$ timesteps. We break into two cases.
    \begin{enumerate}
        \item Suppose there exists some $k' \in \N$ such that $\sum_{t=1}^{k'} x_t^{(k')} > \frac{1}{2}$. Then $\max_{t \in [k']} x_t^{(k')} \geq \frac{1}{2k'}$. Conclude the instance by presenting prices $p_t = 0$ for $t = k'+1, \ldots, T$, with some arbitrarily large $T$. Then the algorithm incurs cost at least
        \begin{align*}
            \frac{1}{k'}\sum_{t=1}^{k'} x_t^{(k')} + 0\cdot\left(1-\sum_{t=1}^{k'} x_t^{(k')}\right) + C \cdot \max_{t \in [k']} x_t^{(k')} &\geq \frac{1}{2k'}(1+C),
        \end{align*}
        while $\Opt$ has cost at most $0 + \frac{C}{T-k'}$. Thus, the competitive ratio is at least
        $$\frac{(1+C)(T-k')}{2k'C} = \Omega(T)$$
        as $T \to \infty$. 
        
        \item Suppose alternatively that there exists no $k' \in \N$ such that $\sum_{t=1}^{k'} x_t^{(k')} > \frac{1}{2}$. Then for any $k \in \N$, conclude the instance with a single price $p_{k+1}=U$. The algorithm incurs cost at least $U/2$, while buying $1/k$ at each of the $k$ earlier prices is feasible with cost $k(1/k)(1/k)+C/k=(1+C)/k$, so $\C{\Opt}\le(1+C)/k$. Setting $T=k+1$, the competitive ratio is at least
        $$\frac{U(T-1)}{2(1+C)} = \Omega(T)$$
        as $T \to \infty$ (i.e., as $k \to \infty$).
        
    \end{enumerate}
\end{proof}

\LargeCLB*
\begin{proof}
    Order the prices so that $p_1\le\cdots\le p_T$, allowing ties. We first show why it suffices to compare uniform purchasing with plans that buy equally in the cheapest rounds. Since $p_t\ge0$ and $C>0$, an optimal plan can be chosen to buy exactly one unit. Fix its largest purchase $v\in[1/T,1]$. Subject to buying at most $v$ in any round, it is cheapest to buy at the lowest prices first. For $1/k\le v\le1/(k-1)$, where $2\le k\le T$, this gives purchases of $v$ in the first $k-1$ rounds, $1-(k-1)v$ in round $k$, and zero thereafter. Their total cost is
    \[
        p_k+v\left(C+\sum_{i=1}^{k-1}p_i-(k-1)p_k\right),
    \]
    which is linear in $v$. Its minimum over that interval therefore occurs at $v=1/k$ or $v=1/(k-1)$. These intervals cover all possible values of $v$, so at least one optimal plan buys $1/k$ in each of the $k$ cheapest rounds for some $k\in[T]$, at cost $(\sum_{i=1}^k p_i+C)/k$.

    Now assume $C\ge(T-1)(U-L)$. For every $k\in[T-1]$, we see that purchasing across the $k$ cheapest rounds is at least as expensive as uniform purchasing across $T$:
    \begin{align*}
        C &\ge k(U-L)\\
          &\ge k\left(U-\frac{1}{k}\sum_{i=1}^k p_i\right),\\
        \frac{T-k}{Tk}C
          &\ge \frac{T-k}{T}\left(U-\frac{1}{k}\sum_{i=1}^k p_i\right),\\
        \left(\frac{1}{T}-\frac{1}{k}\right)C
          &\le \frac{T-k}{T}\left(\frac{1}{k}\sum_{i=1}^k p_i-U\right),\\
        \frac{1}{T}\left(\sum_{i=1}^k p_i+(T-k)U+C\right)
          &\le \frac{1}{k}\left(\sum_{i=1}^k p_i+C\right).
    \end{align*}
    Because $\sum_{i=k+1}^T p_i\le(T-k)U$, the cost of buying $1/T$ in every round satisfies
    \[
        \frac{1}{T}\left(\sum_{i=1}^T p_i+C\right)
        \le \frac{1}{T}\left(\sum_{i=1}^k p_i+(T-k)U+C\right)
        \le \frac{1}{k}\left(\sum_{i=1}^k p_i+C\right).
    \]
    This also holds with equality when $k=T$. Since some optimal plan has one of these forms, uniform purchasing is optimal. An online player that buys $1/T$ in each round is therefore 1-competitive in this regime.
\end{proof}

%% file: appendix-learning-aug-algs.tex
In this section, we give the full proofs of results deferred from \autoref{sec:decision}.

\decisionConsistent*

    \begin{proof}
        Let $A$ be the cost when directly following advice. We consider two cases.
        
        \paragraph{Case 1} If $A\le H$ then it is the case that every shadow variable $y_t=\hat{x}_t$ and that $\hat{x}_t\le m$ for all $t$ because the peak charge is at most $H = Cm$. Therefore, $x_t \ge \hat{x}_t$ up until the step where constraint is satisfied. The $\decision$ algorithm may buy more when $d_t > \hat{x}_t$, displacing at most one unit of lower prices (raising cost at most $(\Phi-L)$) and increasing the peak purchase by at most $\nicefrac{(\Phi-L)}{C}$.

        $$\C{\decision}\le A+2(\Phi-L)\le (1+\frac{2(\Phi-L)}{L+C/T})A$$

        \paragraph{Case 2} If $A > H$, we instead use the trivial bound

        $$\C{\decision}\le U+Cm \le U+H \le (1+\frac{U}{H})A$$
    \end{proof}

\decisionRobust*

    \begin{proof}
        As in \autoref{thm:rob-ub} we will split the analysis into cases of whether a compulsory trade was needed.

        \paragraph{No Compulsory Trade} By construction, we can see that the cost of the shadow decisions is at most $H$. In each time step, if \decision purchases more than the shadow decision ($x_t>y_t$) it must be the case that $x_t \le d_t$, so the price $p_t < \Phi$. The total mass of such purchases is at most 1, and the peak purchase of \decision is at most $\nicefrac{(\Phi-L)}{C}$ greater than the shadow algorithm. Thus, $\C{\decision} \le H+\Phi + (\Phi-L)$. Meanwhile we can trivially lower bound $\C{\Opt} \ge L+\nicefrac{C}{T}$.
    
        \paragraph{Compulsory Trade} Let $t$ be the first round in which the compulsory purchase exceeds both the price-threshold decision $d_t=(\Phi-p_t)_+/C$ and the shadow decision. Earlier purchases total $1-r$ and each is at least its threshold decision. Since $d_s\le(\Phi-L)_+/C\le m$ in future rounds, the compulsory condition $d_t<r-(T-t)m$ implies
        \[
            \sum_{s=1}^T d_s\le (1-r)+d_t+(T-t)m<1.
        \]
        To bound \Opt directly, let $z_s$ be any feasible schedule and set $v=\max_s z_s$. Since $(\Phi-p_s)z_s\le(\Phi-p_s)_+v$ for every $s$, we have
        \[
            \sum_{s=1}^T p_s z_s+Cv
            \ge \Phi\sum_{s=1}^T z_s+
                \Bigl(C-\sum_{s=1}^T(\Phi-p_s)_+\Bigr)v
            \ge \Phi.
        \]
        The last step uses $\sum_s z_s\ge1$, $\Phi\ge0$, and $\sum_s(\Phi-p_s)_+=C\sum_s d_s<C$. Hence $\C{\Opt}\ge\Phi$. No purchase exceeds $m$ when $m\ge1/T$, so $\C{\decision}\le U+Cm\le U+H$.
    
        Combining these results, we see that for a given $\Phi,H$ the \decision algorithm is $\gamma$-robust for:
    
        $$\gamma = \max\left\{\frac{H+2\Phi-L}{L+C/T}, \frac{U+H}{\Phi}\right\} $$    
    \end{proof}

\thmInstance*

\begin{proof}
    We begin by showing that having only one of  $(\hat{\lambda}, \hat{k})$ is insufficient to reach 1-consistency.  

    \begin{restatable}{lemma}{lemInstanceOptCost}\label{lem:instanceOpCost}
    Consider any learning-augmented algorithm with predicted optimal cost $\hat{\lambda} = \C{\Opt}$. The consistency of \Alg is $1+\epsilon \ge \frac{4}{3}-o(1)$.
    \end{restatable}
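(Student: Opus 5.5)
The plan is an adversarial two-instance argument in the style of \autoref{thm:known-T-lb} and \autoref{thm:advice-gen-lb}. The key point is that the prediction $\hat{\lambda}$ is identical on both instances. By \autoref{lem:opt-structure}, $\C{\Opt}=\lambda^*$, where $\lambda^*$ solves $\sum_t(\lambda^*-p_t)_+=C$. So I will build two price sequences with the same prefix and the same water level $\lambda$, but with different numbers $k$ of rounds at or below $\lambda$. An algorithm that knows only $\lambda$ then cannot tell how many cheap rounds remain, and so cannot tell how finely to spread its purchases.

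\textbf{Construction.} Take $L=0$ for simplicity; general $L$ only shifts prices. Fix a target $\lambda$ and a large integer $N$, and set $C=N\lambda/2$. Both instances open with a common prefix of $M=\theta N$ rounds at price $\lambda/2$, for a parameter $\theta\in(0,1)$.
\begin{itemize}
\item Instance $\mathcal{I}_1$ continues with $N-M$ further rounds at $\lambda/2$ and then prices $U\ge\lambda$. Its water-filling sum is $N\cdot\lambda/2=C$, and \Opt buys $1/N$ in each cheap round at cost exactly $\lambda$.
\item Instance $\mathcal{I}_2$ continues with $M'$ rounds at price $0$, where $M\lambda/2+M'\lambda=C$, that is $M'=(1-\theta)N/2$, and then prices $U$. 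Here \Opt buys $1/(M+M')$ in each cheap round, again at cost $\lambda$.
\end{itemize}
By \eqref{eq:opt-lambda}, the advice $\hat{\lambda}=\lambda$ is exact on both instances. Rounding of $M$ and $M'$ to integers is absorbed into the $o(1)$ term as $N\to\infty$, which requires $T$ to grow.

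\textbf{Bounding the algorithm.} Let $s$ be the amount bought during the prefix, which is the same on both instances. On $\mathcal{I}_2$ the algorithm pays at least $s\lambda/2+Cs/M=s\lambda(1/2+1/(2\theta))$, counting only prefix purchases and the peak they force. On $\mathcal{I}_1$, I will lower-bound its cost by
\begin{align*}
\min_{v\ge s/M}\Bigl[\text{cost of buying } 1-s \text{ more from } N-M \text{ rounds at } \lambda/2 \text{ and later rounds at } U, \text{ with peak } v\Bigr].
\end{align*}
This reduces to a one-dimensional comparison. Either the algorithm keeps its peak small and pays $U$ on the leftover, or it raises its peak to about $(1-s)/(N-M)$ and pays the corresponding peak charge $C v$. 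Choosing $U$ large forces the second option. Each instance then gives a ratio that is an explicit function of $s$ and $\theta$. The consistency lower bound is
\[
\max_{\theta}\ \min_{s\in[0,1]}\ \max\bigl\{\text{ratio on }\mathcal{I}_1,\ \text{ratio on }\mathcal{I}_2\bigr\}.
\]

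\textbf{Main obstacle.} The hard step is choosing the shape of the instances so that this max-min evaluates to $4/3$. With the single-level prefix above I am not sure the constant comes out to exactly $4/3$. If it does not, I would first try letting the prefix price $a$ and the suffix price be free parameters tied by the constraint of equal $\lambda$. I would then solve the resulting two-variable optimization; the balancing value of $s$ is found by equating the two ratios, which are linear in $s$ for fixed $\theta$. If needed, I would next replace the zero-price suffix of $\mathcal{I}_2$ by a suffix at some price $q\in(0,\lambda)$. The essential mechanism, committing to a peak of roughly $s/M$ before knowing whether $M$ or $M+M'$ cheap rounds exist, stays the same throughout.
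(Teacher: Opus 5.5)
There is a genuine gap. The step you flag as the main obstacle is the entire content of the lemma, and the construction you commit to cannot deliver $4/3$.

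\textbf{Your bounds certify nothing above $1$.} Your bound on $\mathcal{I}_2$ counts only the prefix, $s\lambda(\frac12+\frac{1}{2\theta})$. At $s=\theta$ this equals $\lambda(1+\theta)/2<\lambda$. But $s=\theta$ (buying $1/N$ in every cheap round) is exactly optimal on $\mathcal{I}_1$, so at that $s$ neither bound exceeds $1$.

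\textbf{The single-level family is capped near $1.10$.} Suppose you also charge the peak needed to finish on the $M'$ zero-price rounds. With $U$ large, the best normalized costs after a prefix purchase $s$ are
\[
R_1(s)=\frac12+\frac12\max\Bigl\{\frac{s}{\theta},\frac{1-s}{1-\theta}\Bigr\},\qquad R_2(s)=\frac{s}{2}+\max\Bigl\{\frac{s}{2\theta},\frac{1-s}{1-\theta}\Bigr\}.
\]
Their minimax is $\min_s\max\{R_1,R_2\}=\frac12+\frac{1+\theta}{2(1+\theta^2)}$. This is at most $\frac{3+\sqrt2}{4}\approx 1.10$, attained at $\theta=\sqrt2-1$. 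So no analysis of this family reaches $4/3$.

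The obstruction is structural. Your instances have $N$ and $N(1+\theta)/2$ cheap rounds, so the ideal prefix purchases $\theta$ and $2\theta/(1+\theta)$ differ by less than a factor of two. In addition, half of the cost on $\mathcal{I}_1$ is price, which does not depend on $s$.

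\textbf{The missing idea.} The number of rounds at or below $\lambda$ must differ by an unbounded factor between the two instances. The prefix should be a constant fraction (e.g.\ half) of the cheap rounds in one instance and a vanishing fraction in the other. Your last fallback is the right knob, turned the wrong way:
\begin{itemize}
\item put the prefix at price $L=0$ and take $\theta=\frac12$, so $C=N\lambda$, with $U\ge 2\lambda$;
\item give $\mathcal{I}_2$ a suffix of $M'\to\infty$ rounds at a price $q\uparrow\lambda$, not at price $0$, with $M'(\lambda-q)=C/2$.
\end{itemize}
Then the algorithm pays at least $\lambda\max\{2s,2(1-s)\}$ on $\mathcal{I}_1$. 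On $\mathcal{I}_2$ it pays at least $(1-s)q+Cs/M\to\lambda(1+s)$. Note that here the remaining $1-s$ must be charged at price $q$, which a prefix-only bound misses. Balancing at $s=\frac13$ gives $\frac43$.

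This is precisely the paper's proof, with $M=1$, $N=2$, $C=2$, $\lambda=1$. The instances are $\mathcal{I}_1=(0,0,2,\dots,2)$ and $\mathcal{I}_2=(0,1-\frac{1}{T-1},\dots,1-\frac{1}{T-1})$, both with optimal cost $1$. They give consistency at least $\max\{2-2x_1,\,1+\frac{Tx_1-1}{T-1}\}\ge\frac{4T-4}{3T-2}$.
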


    \begin{proof}
    Consider the following families of sequences for $T\ge 3$ and $C=2$

    \[
    \mathcal{I}_1^{(T)} = (0,0,\underbrace{2,\dots,2}_{T-2}) \quad \quad
    \mathcal{I}_2^{(T)} = (0,\underbrace{1-\frac{1}{T-1},\dots,1-\frac{1}{T-1}}_{T-1})
    \]

    Both of these instance families have optimal cost $1$ regardless of $T$. For $\mathcal{I}_1^{(T)}$ set $x = (\nicefrac{1}{2},\nicefrac{1}{2},0,\dots,0)$, and for $\mathcal{I}_2^{(T)}$ set $x = (\nicefrac{1}{T},\dots,\nicefrac{1}{T})$. 
    
    Now, consider the purchases of learning-augmented $\Alg$. On both instances $p_1=0$, so $\Alg$ will purchase a set amount $x_1$. For $\mathcal{I}_1^{(T)}$, let $z$ denote the total amount that $\Alg$ purchases across price $2$. Thus, on $\mathcal{I}_1^{(T)}$, $\Alg$ will purchase $x_2 = 1-x_1-z$. We can lower bound the cost on $\mathcal{I}_1^{(T)}$ as follows:

    $$\C{\Alg(\mathcal{I}_1^{(T)})} \ge 2z+2x_2 = 2z+2(1-x_1-z)=2-2x_1$$

    Meanwhile, we can lower bound the cost on $\mathcal{I}_2^{(T)}$ as 
    $$\C{\Alg(\mathcal{I}_2^{(T)})} \ge (1-x_1)(1-\frac{1}{T-1})+2x_1 = 1+\frac{Tx_1-1}{T-1}$$

    Thus, the consistency can be lower bounded as $\max\{2-2x_1,1+\frac{Tx_1-1}{T-1}\}$, which is minimized by equalizing the expressions and solving for $x_1$. 
    
    \begin{align*}
        2-2x_1 &= 1+\frac{Tx_1-1}{T-1} \\
        (T-1)(1-2x_1)&=Tx_1-1 \\\
        T &= Tx+2Tx-2x\\
        x&= \frac{T}{3T-2}
    \end{align*}
    Substituting this value into either side of the $\max$ expression gives consistency at least $\frac{4T-4}{3T-2}$. Since $T$ can be made arbitrarily large, $1+\epsilon \ge \frac{4}{3}-o(1)$.

    \end{proof}

    \begin{restatable}{lemma}{lemInstancePurchaseCount}\label{lem:instanceOpCount}
     Consider any learning-augmented algorithm with predicted optimal purchase count $\;\hat{k} = |\{p_t \le \C{\Opt}\}|$. The consistency of \Alg is at least $1+\frac{1}{4\hat{k}-3}$.
    
    \end{restatable}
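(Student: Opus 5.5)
The plan is a two-instance adversary argument with a shared prefix, in the same style as \autoref{thm:known-T-lb} and \autoref{lem:instanceOpCost}. The adversary fixes $\hat{k}=k$ and builds two instances that both have exactly $k$ prices at most $\C{\Opt}$. The prediction is then correct on both, so it gives the algorithm no way to tell them apart during the common prefix.

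\textbf{Construction.} Normalize $C=k$, so that a peak of $1/k$ costs $1$. The common prefix consists of $k$ rounds at a price $a$, with $a$ to be tuned. After the prefix the instances branch.
\begin{itemize}
\item In $\mathcal{I}_1$, every remaining round has a high price $h$. Buying $1/k$ in each prefix round costs $a+1$.
\item In $\mathcal{I}_2$, the prefix is followed by $k$ rounds at price $0$ and then high prices $h$ up to the horizon. Buying $1/k$ at each zero costs $1$.
\end{itemize}
I will verify that these are exactly the optimal values, using the structure in \autoref{lem:opt-structure}, or directly via the linear-in-$v$ argument from the proof of \autoref{lem:largeC}.

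I also need the count condition $|\{t : p_t\le \C{\Opt}\}|=k$ to hold on both instances:
\begin{itemize}
\item On $\mathcal{I}_1$ it requires $h>a+1$.
\item On $\mathcal{I}_2$ it requires $a>1$, so that the prefix prices are \emph{not} counted.
\end{itemize}
Hence I take $a$ slightly above $1$, or exactly $1$ if ties are resolved in favour of the adversary. The condition $h>a+1$ is easily met while keeping $h\le U$ and respecting Assumption~\ref{asm:smallC}.

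\textbf{Trade-off.} Let $s$ be the total amount the algorithm buys during the prefix. This $s$ is identical on both instances, since the prices and the prediction $\hat{k}$ coincide there.
\begin{itemize}
\item On $\mathcal{I}_1$, the algorithm pays at least $as$ plus a peak charge of at least $Cs/k=s$. It then buys the remaining $1-s$ at price $h$. So $\C{\Alg(\mathcal{I}_1)}\ge (a+1)s+h(1-s)$.
\item On $\mathcal{I}_2$, it pays $as$ for the prefix purchases. Its peak is at least $\max\{s/k,(1-s)/k\}$, because the remaining mass can be spread over at most the $k$ zero rounds for free, and spreading it further would only reach high prices. So $\C{\Alg(\mathcal{I}_2)}\ge as+\max\{s,1-s\}$.
\end{itemize}
The consistency is therefore at least
\[
\min_{s\in[0,1]}\max\left\{\frac{(a+1)s+h(1-s)}{a+1},\; as+\max\{s,1-s\}\right\}.
\]
The first ratio grows linearly as $s$ decreases, and the second grows as $s$ increases. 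I will equalize the two branches and then choose $h$ and $a$ so that the resulting value is at least $1+\frac{1}{4k-3}$. If the cleanest instance is slightly different, for instance a prefix of $k-1$ equal prices plus a perturbation that yields the denominator $4k-3$, I will adjust the construction rather than the method.

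\textbf{Main obstacle.} The difficulty is keeping $\hat{k}$ exactly correct on both branches while still forcing a gap. Prices near the threshold $\C{\Opt}$ change the count, so ties and strict inequalities must be handled carefully; the choices $a>1$ and $h>a+1$ above are where this care goes. Meanwhile, $a$ being close to $1$ is what makes the prefix purchases costly on $\mathcal{I}_2$. I will first try the limiting choice $a\downarrow 1$ with $h$ as large as the assumptions allow and check that this limit gives at least the stated bound. Only if the assumptions force $h$ to be small will I fall back on optimizing $a$ and $h$ jointly to recover exactly $1+\frac{1}{4\hat{k}-3}$.
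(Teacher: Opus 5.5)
Your proposal is correct, but it uses a genuinely different and stronger construction than the paper's.

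\textbf{The paper's route.} The paper shares only a \emph{single} round. With $L=0$, $p=2C/k$, $q=4C/k$, it takes $\mathcal{I}_1=(p,\dots,p,q,\dots,q)$ with $k$ copies of $p$, and $\mathcal{I}_2=(p,0,\dots,0,q,\dots,q)$ with $k$ zeros. It then uses the identity $\C{\Alg}-\lambda^*=\sum_t(p_t-\lambda^*)_+x_t+\sum_t(\lambda^*-p_t)_+(m-x_t)$ to get ratios $1+x_1$ and $1+\frac{1-kx_1}{3(k-1)}$, which balance at $x_1=\frac{1}{4k-3}$. Its gap decays like $1/k$ for a structural reason: an algorithm that skips the shared round on $\mathcal{I}_1$ still has $k-1$ cheap rounds left, and only pays a peak of $\frac{1}{k-1}$ instead of $\frac1k$.

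\textbf{What your route buys.} By sharing all $k$ cheap rounds, you force an algorithm that holds back on $\mathcal{I}_1$ to buy at $h$. Carrying out your equalization in the limit $a\downarrow 1$ gives the crossing point $s=\frac{h}{h+2}$ and value $\frac{2h}{h+2}$, independent of $k$. This exceeds $\frac65\ge 1+\frac{1}{4k-3}$ for all $k\ge 2$ once $h>3$. In unnormalized units that means $U>3C/k$, compared with the paper's assumption $U\ge 4C/k$. So no tuning toward the denominator $4k-3$ is needed; the stated bound is simply implied. With $U$ large and $a$ of order $\sqrt{h}$, the same two instances even give a bound of order $\sqrt{h}$, so your construction shows considerably more than the lemma claims.

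\textbf{What it costs.} You need $T\ge 2k$, whereas the paper's instances need only $T\ge k+1$. This is harmless if the adversary may choose $T$, as the paper's ``families indexed by $T$'' suggest. It does not, however, cover $k>T/2$ at a fixed horizon.

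\textbf{Two details to fix in the write-up.} First, the peak on $\mathcal{I}_2$ need not be at least $(1-s)/k$, because the algorithm may spread its remaining purchases onto the high-price rounds. Replace that claim with a cost bound: if $w$ is the amount bought at the zeros, the cost is at least $as+\max\{s,w\}+h(1-s-w)\ge as+\max\{s,1-s\}$, which holds because $h\ge 1$. Second, $a>1$ must be strict, since at $a=1$ the count on $\mathcal{I}_2$ is $2k$. This is harmless because the bound is continuous in $a$. Note also that what favours $a$ near $1$ is the smallness of $\C{\Opt(\mathcal{I}_1)}=a+1$, not the cost of prefix purchases on $\mathcal{I}_2$.
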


    \begin{proof}
        We begin by establishing a helpful representation for the difference between $\C{\Alg}$ and $\C{\opt}$, where $m$ is the max purchase of $\Alg$.

        \begin{equation}\label{eq:cost-decomp}
            \C{\Alg} - \lambda^* = \sum(p_t-\lambda^*)_+ +\sum(\lambda^*-p_t)_+(m-x_t)
        \end{equation}

        The equation holds by decomposing the price cost of \Alg as $\sum p_tx_t = \lambda^*+\sum(p_t-\lambda^*)_+x_t-\sum(\lambda^*-p_t)x_t$, and recalling that $\lambda^*$ is specifically the value where $\sum(\lambda^*-p_t)=C$.

        Now, let $L=0$ and set parameters $U,C$ such that $U \ge \frac{4C}{\hat{k}}$ for $2 \le k \le T-1$. In this region, note that both $p = \frac{2C}{k}$ and $q=\frac{4C}{k}$ are within $[L,U]$. Consider the following families of sequences indexed by $T$.

    \[
    \mathcal{I}_1^{(T)} = (\underbrace{p,\dots, p}_{k},\underbrace{q,\dots,q}_{T-k}) \quad \quad
    \mathcal{I}_2^{(T)} = (p,\underbrace{L,\dots,L}_{k},\underbrace{q,\dots,q}_{T-k-1})
    \]

    Note that the optimal solution to $\mathcal{I}_1^{(T)}$ is to purchase $\nicefrac{1}{k}$ at each of the prices $p$, for total cost $\nicefrac{3C}{k} < q$. Meanwhile, the optimal solution to $\mathcal{I}_2^{(T)}$ is to purchase $\nicefrac{1}{k}$ at each of the prices $0$, for total cost $\frac{C}{k}<p$.

    Upon observing $p_1 = p$, \Alg cannot differentiate between the possible instances, and will purchase a constant $x_1$. Recalling \ref{eq:cost-decomp}, we can state that the cost difference between \Alg and \Opt on $\mathcal{I}_2^{(T)}$ thus satisfies:

    $$ \C{\Alg(\mathcal{I}_2^{(T)})}-\C{\Opt(\mathcal{I}_2^{(T)})} \ge \sum(p_t-\frac{C}{k})_+x_t \ge (p_1-\frac{C}{k})_+x_1$$

    Since $p_1 = \frac{2C}{k}>\frac{C}{k}=\lambda^*$, we can thus claim $\C{\Alg(\mathcal{I}_2^{(T)})}-\C{\Opt(\mathcal{I}_2^{(T)})}\ge \frac{C}{k}x_1$ and $\frac{\C{\Alg(\mathcal{I}_2^{(T)})}}{\C{\Opt(\mathcal{I}_2^{(T)})}} \ge 1+x_1$.

    Now considering $\mathcal{I}_1^{(T)}$, let $y$ be the total volume purchased during prices $q$, and recall $q>\lambda^*$. Thus, $1-x-y$ was purchased during remaining prices $p$, meaning max purchase $m\ge \frac{1-x_1-y}{k-1}$. Recalling \ref{eq:cost-decomp}, we can state that the cost difference between \Alg and \Opt on $\mathcal{I}_1^{(T)}$ is 

    \begin{align*}
        \C{\Alg(\mathcal{I}_1^{(T)})}-\C{\Opt(\mathcal{I}_1^{(T)})} &= \sum(p_t-\frac{3C}{k})_+x_t+{}\sum(\frac{3C}{k}-p_t)_+(\frac{1-x_1-y}{k-1}-x_t) \\
        &\ge y\frac{C}{k}+\frac{C}{k}\sum_{t:p_t=p}(\frac{1-x_1-y}{k-1}-x_t)\\
        \text{Since $p$ appears }& k\text{ times} \\
         &\ge \frac{C}{k}[y+k\frac{1-x_1-y}{k-1}-(1-y)]\\
        &\ge \frac{C}{k}[\frac{1-kx_1+(k-2)y}{k-1}]\\
        \text{Since } &k\ge 2 \\
        &\ge \frac{C}{k(k-1)}(1-kx)
    \end{align*}

    Therefore, we can bound the competitive ratio on $\mathcal{I}_1^{(T)}$ as $\frac{\C{\Alg(\mathcal{I}_1^{(T)})}}{\C{\Opt(\mathcal{I}_1^{(T)})}} \ge 1+\frac{1-kx_1}{3(k-1)}$.

    Combining these results, we find consistency is lower bounded by $ 1+\max\{x_1, \frac{1-kx_1}{3(k-1)}\}$. This value is clearly minimized when $x_1= \frac{1-kx_1}{3(k-1)}$. Substituting gives the result that consistency with only purchase count advice is at least $1+\frac{1}{4k-3}$.
    \end{proof}

Now that we have established that either prediction is insufficient by itself, we can observe that combining optimal cost and purchase count predictions results in a 1-competitive algorithm. Given perfect predictions $(\hat{\lambda} = \lambda^*, \hat{k} =k)$, consider an algorithm which purchases $x_t=\nicefrac{1}{\hat{k}} \mathbbm{1} \{p_t \le \hat{\lambda}\}$. We can see this is exactly the structure of the optimal primal solution given in \autoref{eq:offline-opt-solution}. Thus, $\C{\Alg} = \C{\Opt}$ when both predictions are perfectly accurate, giving 1-consistency.
\end{proof}